\patchcmd{\thmhead}{(#3)}{#3}{}{}
\DeclareMathOperator{\Supp}{Supp} 
\DeclareMathOperator{\ev}{ev} 
\DeclareMathOperator{\PRM}{PRM}
\DeclareMathOperator{\RM}{RM}
\newcommand{\F}{{\mathbb{F}}}
\newcommand{\fq}{\mathbb{F}_q}
\newcommand{\fqs}{\mathbb{F}_{q^s}}
\newcommand{\PP}{{\mathbb{P}}}
\newcommand{\PM}{{\mathbb{P}^{m}}}
\newcommand{\B}{{\mathcal{B}}}
\newcommand{\BD}{{\mathcal{B}_D}}
\DeclarePairedDelimiter\abs{\lvert}{\rvert}%
\DeclarePairedDelimiter\norm{\lVert}{\rVert}%
\let\oldabs\abs
\def\abs{\@ifstar{\oldabs}{\oldabs*}}
\let\oldnorm\norm
\def\norm{\@ifstar{\oldnorm}{\oldnorm*}}
\newtheorem{thm}{Theorem}[section]
\newtheorem{prop}[thm]{Proposition}
\newtheorem{cor}[thm]{Corollary}
\newtheorem{lem}[thm]{Lemma}
\theoremstyle{definition}
\newtheorem{rem}[thm]{Remark} 
\newtheorem{ex}[thm]{Example}
\title[A recursive construction for projective Reed-Muller codes]{A recursive construction for projective Reed-Muller codes}
\author{Rodrigo San-José}
\curraddr{
\texttt{Rodrigo San-José:} IMUVA-Mathematics Research Institute, Universidad de Valladolid, 47011 Valladolid (Spain).
}
\email{rodrigo.san-jose@uva.es}
\thanks{This work was supported in part by the following grants: Grant TED2021-130358B-I00 funded by MICIU/AEI/10.13039/501100011033 and by the ``European Union NextGenerationEU/PRTR'', Grant PID2022-138906NB-C21 funded by MICIU/AEI/ 10.13039/501100011033 and by ERDF/EU, and FPU20/01311 funded by the Spanish Ministry of Universities.}
\subjclass[2020]{Primary: 94B05. Secondary: 11T71, 14G50}
\keywords{Projective Reed-Muller codes, recursive construction, subfield subcodes, generalized Hamming weights}
\begin{document}

\maketitle

\begin{abstract}
We give a recursive construction for projective Reed-Muller codes in terms of affine Reed-Muller codes and projective Reed-Muller codes in fewer variables. From this construction, we obtain the dimension of the subfield subcodes of projective Reed-Muller codes for some particular degrees that give codes with good parameters. Moreover, from this recursive construction we derive a lower bound for the generalized Hamming weights of projective Reed-Muller codes which is sharp in most of the cases we have checked.
\end{abstract}

\section{Introduction}
Binary affine Reed-Muller codes can be constructed recursively via the $(u\mid u+v)$ construction, and, more generally, $q$-ary affine Reed-Muller codes can be constructed recursively using the matrix-product code construction \cite[Thm 5.6]{matrixproductcodes}. These recursive constructions provide a wealth of information about the code. For example, the recursive construction from \cite{matrixproductcodes} provides a simple proof for the minimum distance of affine Reed-Muller codes. Moreover, the subfield subcode of a code obtained using the $(u\mid u+v)$ construction can be obtained by applying the $(u\mid u+v)$ construction to the subfield subcodes of the component codes. In this work we are interested in a recursive construction for projective Reed-Muller codes, a generalization of affine Reed-Muller codes obtained by evaluating homogeneous polynomials in the projective space $\mathbb{P}^m$ which was introduced by Lachaud \cite{lachaudPRM} and whose basic parameters are presented in full generality in \cite{sorensen}. We apply this recursive construction to obtain information about the subfield subcodes and generalized Hamming weights of projective Reed-Muller codes.

Given a code $C\subset \fqs^n$, its subfield subcode with respect to the extension $\fqs\supset \fq$ is the linear code $C^\sigma:=C\cap \fq^n$. This is a standard procedure that has been used to construct long linear codes over a small finite field. In particular, this technique has been applied to obtain BCH codes as subfield subcodes of Reed-Solomon codes \cite{bierbrauercyclic}, and in the multivariate case, the subfield subcodes of $J$-affine variety codes (in particular, affine Reed-Muller codes) are well known and have been applied in several contexts \cite{galindo1,galindolcd,galindostabilizer}. The subfield subcodes of projective Reed-Solomon codes and projective Reed-Muller codes were studied in \cite{sanjoseSSCPRS} and \cite{sanjoseSSCPRM}, respectively. The primary challenge when dealing with subfield subcodes is the computation of a basis for the code, which, in particular, gives the dimension of the subfield subcode. However, one can check in \cite{sanjoseSSCPRM} that, for $m=2$, the expressions for the basis of the subfield subcode get quite involved, and for $m>2$, obtaining explicit expressions for the basis in general seems out of reach. In Section \ref{secssc}, we show that, for certain degrees, the recursive construction we obtain for projective Reed-Muller codes can be applied to their subfield subcodes as well. This directly gives the dimension of these subfield subcodes in a recursive manner, for any $m\geq 2$. Moreover, an explicit expression for a basis of these subfield subcodes can be obtained in terms of the basis from the subfield subcodes of affine Reed-Muller codes and the subfield subcodes of projective Reed-Muller codes with fewer variables. We also show that these particular degrees give codes with good parameters. 

With respect to the generalized Hamming weights of a code, these are a set of parameters that generalize the minimum distance of a code. One of the main applications of the generalized Hamming weights is that they characterize the performance of the code on the wire-tap channel of type II \cite{weiGHW}. The generalized Hamming weights of affine Reed-Muller codes were completely determined more than 20 years ago in \cite{pellikaanGHWRM}. However, the computation of the generalized Hamming weights of projective Reed-Muller codes in general remains an open problem and only partial results are known \cite{boguslavsky,beelenGHWPRM2,dattaGHWPRM3}. In \cite{beelenGHWPRM}, many of the previous results and hypotheses are collected, and the authors obtain the generalized Hamming weights of projective Reed-Muller codes in some cases for degree $d<q^s$. In Section \ref{secghw}, we use the recursive construction from Section \ref{secrecursive} to give a recursive lower bound for the generalized Hamming weights of a projective Reed-Muller code of any degree. Moreover, we also provide an upper bound that gives us a criterion to ensure that the bound is sharp in many cases. In the particular case of $m=2$, we give a more explicit expression for these bounds. By considering the general properties of generalized Hamming weights and our bounds, we obtain the exact values of the generalized Hamming weights of projective Reed-Muller codes in many cases. 

\section{Preliminaries}
We consider the finite field $\fq$ of $q$ elements with characteristic $p$, and its degree $s$ extension $\F_{q^s}$, with $s\geq 1$. We consider the projective space $\PM$ over $\fqs$. We denote by $p_j$ the number of points in $\PP^j$, i.e., $p_j=\frac{q^{s(j+1)}-1}{q^s-1}=q^{sj}+q^{s(j-1)}+\cdots +1$. Throughout this work, we will fix representatives for the points of $\PM$: for each point $[Q]\in \PM$, we choose the representative whose first nonzero coordinate from the left is equal to 1. We will denote by $P^m$ the set of representatives that we have chosen (seen as points in the affine space $\mathbb{A}^{m+1}$). Therefore, we have the following decomposition
$$
P^{m}=\left(\{1\}\times \F_{q^s}^m\right) \cup \left(\{0\}\times \{1\}\times \F_{q^s}^{m-1}\right)\cup\cdots\cup \{(0,\dots,0,1)\}.
$$
Moreover, we also can obtain this recursively by noting that 
\begin{equation}\label{descomposicion}
P^m=\left(\{1\}\times \F_{q^s}^m\right) \cup \left( \{0\}\times P^{m-1}\right).
\end{equation}

We consider now the polynomial ring $S=\fqs [x_0,\dots,x_m]$. Let $n=\abs{P^m}=p_m$. We define the following evaluation map: 
$$
\ev:S \rightarrow \fqs^{n},\:\: f\mapsto \left(f(Q_1),\dots,f(Q_n)\right)_{Q_i \in P^m}.
$$
Let $d$ be a positive integer. If we consider $S_d\subset S$, the set of homogeneous polynomials of degree $d$, we have that $\ev(S_d)$ is the projective Reed-Muller code of degree $d$, which we will denote by $\PRM_d(q^s,m)$, or $\PRM_d(m)$ if there is no confusion about the field. For $m=1$, we obtain the projective Reed-Solomon codes (sometimes called doubly extended Reed-Solomon codes), which are MDS codes with parameters $[q^s+1,d+1,q^s-d+1]$. For a code $C\subset \F_q^n$, we will denote its minimum distance by $d_1(C)$ (using the notation for generalized Hamming weights that we will consider in Section \ref{secghw}). For the case $m\geq 2$, we have the following results from \cite{sorensen} about the parameters of projective Reed-Muller codes and their duality.

\begin{thm}\label{paramPRM}
The projective Reed-Muller code $\PRM_d(q^s,m)$, $1\leq d\leq m(q^s-1)$, is an $[n,k]$-code with 
$$
\begin{aligned}
&n=\frac{q^{s(m+1)}-1}{q^s-1},\\
&k=\sum_{t\equiv d\bmod q^s-1,0<t\leq d}\left( \sum_{j=0}^{m+1}(-1)^j\binom{m+1}{j}\binom{t-jq^s+m}{t-jq^s}   \right).\\
\end{aligned}
$$
For the minimum distance, we have
$$
d_1(\PRM_d(q^s,m))=(q^s-\mu)q^{s(m-\nu-1)}, \text{ where } \;
d-1=\nu (q^s-1)+\mu, \;0\leq \mu <q^s-1.
$$
\end{thm}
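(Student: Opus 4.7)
The length $n=\abs{P^m}=p_m$ is immediate from the definition of the evaluation map. I would treat the dimension and minimum distance by two separate arguments, both resting on the decomposition \eqref{descomposicion}.

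For the dimension, the plan is to compute the Hilbert function of the graded quotient $S/I(P^m)$, where $I(P^m)\subset S$ is the homogeneous ideal of polynomials vanishing on the chosen representatives. The central observation is that for a representative $(0,\dots,0,1,a_{k+1},\dots,a_m)$ the coordinate $x_k$ evaluates to $1$, so within a single graded component one can decrease any exponent $\alpha_k\geq q^s$ by $q^s-1$ without altering the value at that point. This yields a natural map
$$
\bigoplus_{\substack{0<t\leq d\\ t\equiv d\pmod{q^s-1}}} M_t \;\twoheadrightarrow\; \PRM_d(q^s,m),
$$
where $M_t\subset S_t$ is spanned by the monomials of degree $t$ with every exponent at most $q^s-1$, the map on the $t$-summand being multiplication by an appropriate power of (the first nonzero coordinate of the point) composed with evaluation. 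Showing the combined map is an isomorphism requires a Gröbner-basis style argument identifying the initial ideal of $I(P^m)$, so that the images from distinct $M_t$ remain linearly independent after evaluation. The inner sum is then $\dim M_t$, computed by inclusion-exclusion over the subsets of the $m+1$ variables whose exponents are forced to be at least $q^s$.

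For the minimum distance, I would argue by induction on $m$. The base case $m=1$ is the MDS property of projective Reed--Solomon codes: with $\nu=0$ and $\mu=d-1$ the formula reduces to $q^s-d+1$, as required. For the inductive step, fix a minimum-weight codeword $\ev(F)$ and split its support by \eqref{descomposicion}. On the affine chart $\{1\}\times\fqs^m$, the dehomogenization $F(1,x_1,\dots,x_m)$ has degree at most $d$, and the Kasami--Lin--Peterson bound for the affine Reed--Muller code $\RM_d(q^s,m)$ controls the affine zero count. On the hyperplane at infinity $\{0\}\times P^{m-1}$, the restriction $F(0,x_1,\dots,x_m)$ is homogeneous of degree $d$, so the inductive hypothesis applies in dimension $m-1$. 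The delicate case is when $F(0,\cdot)$ vanishes identically on $P^{m-1}$: then $x_0\mid F$, and one must peel off factors of $x_0$, rerun the argument on $F/x_0$, and track carefully how $(\nu,\mu)$ and the target bound transform.

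For the matching upper bound I would exhibit an explicit extremal polynomial, essentially a product of $\nu$ norm-type forms of degree $q^s-1$ (each tailored to vanish on all but one residue class of a chosen coordinate in the affine chart) together with $\mu+1$ well-chosen linear forms in one of the remaining variables. Counting the nonzero evaluations separately on the affine chart and on the hyperplane at infinity should yield exactly $(q^s-\mu)q^{s(m-\nu-1)}$. The main obstacle is the inductive case analysis for the lower bound on $d_1$: balancing the affine contribution against the hyperplane contribution, and in particular handling the divisibility of $F$ by powers of $x_0$, is where the combinatorics gets intricate; by contrast, the dimension reduces to Hilbert-function bookkeeping once the surjection above is established as an isomorphism.
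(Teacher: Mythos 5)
First, a point of comparison: the paper does not prove this theorem at all — it is quoted as a known result from S\o rensen \cite{sorensen} (with the monomial basis $M$ justified via \cite{projectivefootprint}) — so your proposal can only be measured against the literature. Your treatment of $n$ and of the dimension is sound in outline and matches the standard argument: the inner sum is exactly the inclusion--exclusion count of degree-$t$ monomials in $m+1$ variables with all exponents at most $q^s-1$, and these are in bijection with the basis $M=\bigcup_i M_i$ described in the paper by raising the first nonzero exponent back up to total degree $d$; the linear independence you defer to a Gr\"obner/footprint argument is precisely what \cite{projectivefootprint} supplies.

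The minimum distance is where your plan has a genuine gap, and it is not located where you think it is. The case $F(0,\cdot)\equiv 0$ on $P^{m-1}$ that you single out as delicate is in fact the easy one: after modifying $F$ by an element of $I(\PP^m)_d$ one gets $x_0\mid F$, so the support lies in the affine chart and $F(1,\cdot)$ has degree at most $d-1$, giving exactly the target $d_1(\RM_{d-1}(m))=(q^s-\mu)q^{s(m-\nu-1)}$. The problem is the ``generic'' case where both restrictions are nonzero: writing $d=\nu(q^s-1)+\mu+1$ with $0<\mu$, your two ingredients give
$d_1(\RM_d(m))+d_1(\PRM_d(m-1))=(q^s-\mu-1)q^{s(m-\nu-1)}+(q^s-\mu)q^{s(m-\nu-2)}$,
which is $\mu\, q^{s(m-\nu-2)}$ \emph{short} of the target; worse, when $F(1,\cdot)$ vanishes identically on $\fqs^m$ (possible once $d\geq q^s$) but $F(0,\cdot)$ does not, the only bound your induction provides is $d_1(\PRM_d(m-1))$, a full factor $q^s$ below the target. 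So the induction cannot be closed by simply adding the affine and at-infinity bounds; S\o rensen's actual argument requires a finer analysis of how zeros distribute over the lines through a point, and that extra idea is missing from your sketch. Your extremal construction also overshoots: a product of $\nu$ norm forms $x_i^{q^s-1}-x_0^{q^s-1}$ and $\mu+1$ linear forms not involving $x_0$ has weight $(q^s-\mu-1)q^{s(m-\nu-1)}+(q^s-1)^{\nu}q^{s(m-\nu-1)}$, which exceeds the target for $\nu\geq 1$; the correct minimum-weight codeword is simply $x_0$ times the homogenization of a minimum-weight codeword of $\RM_{d-1}(m)$, which makes the upper bound immediate and explains the identity $d_1(\PRM_d(m))=d_1(\RM_{d-1}(m))$ that the paper uses later.
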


\begin{thm}\label{dualPRM}
Let $1\leq d\leq m(q^s-1)$ and let $d^\perp=m(q^s-1)-d$. Then
$$
\begin{aligned}
&\PRM_d^\perp(q^s,m)=\PRM_{d^\perp}(q^s,m) &\text{ if } d\not\equiv 0\bmod (q^s-1), \\
&\PRM_d^\perp(q^s,m)=\PRM_{d^\perp}(q^s,m)+\langle (1,\dots,1) \rangle &\text{ if } d\equiv 0\bmod (q^s-1).
\end{aligned}
$$
\end{thm}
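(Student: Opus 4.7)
My plan is to split the duality statement into three pieces and tackle them in order. First, establish the containment $\PRM_{d^\perp}(q^s,m)\subseteq\PRM_d^\perp(q^s,m)$. Second, show that $(1,\dots,1)\in\PRM_d^\perp(q^s,m)$ precisely when $d\equiv 0\pmod{q^s-1}$, which accounts for the extra generator on the right-hand side. Finally, upgrade these inclusions to equalities by a dimension count based on Theorem \ref{paramPRM}.

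For the containment, it suffices to verify $\sum_{Q\in P^m}Q^\gamma=0$ for every multiindex $\gamma$ with $|\gamma|=d+d^\perp=m(q^s-1)$; bilinearity then gives $\langle\ev(f),\ev(g)\rangle=0$ for all $f\in S_d$ and $g\in S_{d^\perp}$. I would prove the monomial identity via an orbit trick. Each class $[P]\in\PM$ has exactly $q^s-1$ nonzero affine representatives $\lambda Q$ with $\lambda\in\fqs^*$, and $(\lambda Q)^\gamma=\lambda^{|\gamma|}Q^\gamma$. Since $|\gamma|=m(q^s-1)$ is a positive multiple of $q^s-1$, $\sum_{\lambda\in\fqs^*}\lambda^{|\gamma|}=-1$, and so
$$
\sum_{P\in\fqs^{m+1}\setminus\{0\}}P^\gamma \;=\; -\sum_{Q\in P^m}Q^\gamma.
$$
On the other hand, $0^\gamma=0$ since $|\gamma|>0$, so the left-hand side equals the full affine sum $\prod_{i=0}^m\sum_{a\in\fqs}a^{\gamma_i}$. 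Each factor vanishes unless $\gamma_i$ is a positive multiple of $q^s-1$, and requiring this for every index would force $|\gamma|\geq(m+1)(q^s-1)$, which is impossible. Hence the sum vanishes. The same orbit argument applied to a single monomial $x^\alpha$ of degree $d$ shows $\sum_{Q}Q^\alpha=0$ whenever $d$ is a positive multiple of $q^s-1$, yielding $(1,\dots,1)\in\PRM_d^\perp(q^s,m)$ in the exceptional case and justifying the extra generator.

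The remaining task is the dimension count: verify
$$
\dim\PRM_d(q^s,m) + \dim\bigl(\PRM_{d^\perp}(q^s,m)+\langle(1,\dots,1)\rangle\bigr) = n
$$
when $d\equiv 0\pmod{q^s-1}$, and $\dim\PRM_d(q^s,m)+\dim\PRM_{d^\perp}(q^s,m)=n$ otherwise. This combinatorial identity, which amounts to pairing the residue classes $t\equiv d$ and $m(q^s-1)-t\equiv d^\perp$ in the alternating binomial sums of Theorem \ref{paramPRM}, is the step I expect to be the main obstacle; I would attack it either through the Chu--Vandermonde convolution or via a sign-reversing involution on pairs $(t,j)$. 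Once the dimensions agree, the inclusions established above upgrade to the claimed equalities.
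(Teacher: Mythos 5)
First, a point of reference: the paper does not prove Theorem \ref{dualPRM} at all; it is imported verbatim from \cite{sorensen}, so there is no internal proof to compare your route against, and your attempt has to be judged on its own.

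Your first two steps are correct and complete. The orbit computation $\sum_{P\in\fqs^{m+1}\setminus\{0\}}P^\gamma=-\sum_{Q\in P^m}Q^\gamma$ for $\abs{\gamma}$ a positive multiple of $q^s-1$, the factorization of the full affine sum as $\prod_{i=0}^m\sum_{a\in\fqs}a^{\gamma_i}$, and the observation that not every $\gamma_i$ can be a positive multiple of $q^s-1$ when $\abs{\gamma}\leq m(q^s-1)$, together give $\PRM_{d^\perp}(q^s,m)\subseteq\PRM_d^\perp(q^s,m)$ and, in degree $d$ itself, $(1,\dots,1)\in\PRM_d^\perp(q^s,m)$ when $d\equiv 0\bmod (q^s-1)$. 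This is essentially Sørensen's orthogonality lemma and is sound.

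The genuine gap is the third step, which you flag as the expected obstacle and then do not carry out: the identity $\dim\PRM_d(q^s,m)+\dim\PRM_{d^\perp}(q^s,m)=n$ (respectively $n-1$ in the exceptional case) is the real content of the theorem, and "pair the residue classes in the alternating sums of Theorem \ref{paramPRM}" is a plan, not a proof — that binomial identity is not routine. Two further points are silently assumed. First, in the case $d\equiv 0\bmod(q^s-1)$ your count needs $(1,\dots,1)\notin\PRM_{d^\perp}(q^s,m)$, since otherwise adding $\langle(1,\dots,1)\rangle$ does not increase the dimension; this is true but nowhere verified. Second, you assert $(1,\dots,1)\in\PRM_d^\perp$ "precisely when" $d\equiv 0$, but only prove the forward implication (the converse is not needed for the stated equalities, so this is only a matter of phrasing). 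A more tractable route to the dimension count is the reduced-monomial basis $M=\bigcup_{i=0}^m M_i$ recalled in the Preliminaries: the points of $P^m$ biject with reduced exponent vectors, and complementation $\alpha_j\mapsto q^s-1-\alpha_j$ gives an explicit near-bijection between the bases in degrees $d$ and $d^\perp$, which yields the count without any binomial gymnastics. As written, the proposal establishes one inclusion plus the membership of the all-ones vector, but not the theorem.
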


Let $d>0$ and let $M_i=\{x_i^{\alpha_i}\cdots x_m^{\alpha_m},\abs{\alpha}=d,\alpha_i>0,0\leq \alpha_j\leq q^s-1, i<j\leq m\}$, for $i=0,1,\dots,m$, and $M=\bigcup_{i=0}^m M_i$. One can check that the classes of the monomials in $M$ form a basis for $S_d/I(\PP^m)_d\cong \PRM_d(m)$ (for example, see \cite{projectivefootprint}), where $I(\PP^m)$ is the vanishing ideal of $\PP^m$, i.e, the ideal generated by the homogeneous polynomials that vanish at all the points of $\PP^m$. This also implies that the image by the evaluation map of $M$ is a basis for $\PRM_d(m)$.

We will also need to use affine Reed-Muller codes, which we denote by $\RM_d(q^s,m)$, or simply $\RM_d(m)$ if there is no confusion about the field. We consider the evaluation map
$$
\ev_{\mathbb{A}}:R \rightarrow \fqs^{q^{ms}},\:\: f\mapsto \left(f(Q_1),\dots,f(Q_n)\right)_{Q_i \in \fqs^m},
$$
where $R=\fqs[x_1,\dots,x_m]$. Let $R_{\leq d}$ be the polynomials of $R$ with degree less than or equal to $d$. Then we have $\RM_d(m):=\ev_{\mathbb{A}}(R_{\leq d})$. The following result about the parameters of affine Reed-Muller codes appears in \cite{delsarteRM,kasamiRM}.

\begin{thm}\label{paramRM}
The Reed-Muller code $\RM_d(q^s,m)$, $1\leq d\leq m(q^s-1)$, is an $[n,k]$-code with 
$$
\begin{aligned}
&n=q^{sm},\\
&k=\sum_{t=0}^d \sum_{j=0}^m (-1)^j\binom{m}{j}\binom{t-jq^s+m-1}{t-jq^s}.\\
\end{aligned}
$$
For the minimum distance, we have
$$
d_1(\RM_d(q^s,m))=(q^s-\mu)q^{s(m-\nu-1)}, \text{ where } \;
d=\nu(q^s-1)+\mu, \;0\leq \mu <q^s-1.
$$
\end{thm}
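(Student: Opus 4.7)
The plan is to separately establish the length, the dimension, and the minimum distance. The length $n = q^{sm}$ is immediate since $\ev_{\A}$ evaluates at every point of $\fqs^m$. For the dimension, I would use that the vanishing ideal of $\fqs^m$ inside $R$ is generated by $x_i^{q^s} - x_i$ for $i = 1,\dots, m$, so that $\ev_{\A}$ induces an $\fqs$-algebra isomorphism $R/(x_1^{q^s} - x_1, \dots, x_m^{q^s} - x_m) \cong \fqs^{q^{sm}}$ whose source has the reduced monomials $x_1^{\alpha_1}\cdots x_m^{\alpha_m}$ with $0 \leq \alpha_i \leq q^s - 1$ as an $\fqs$-basis. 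Since reduction modulo this ideal does not increase the total degree, $\RM_d(q^s, m) = \ev_{\A}(R_{\leq d})$ has as a basis the evaluations of those reduced monomials of total degree at most $d$. The dimension therefore equals the number of weak compositions of some $t \in \{0,1,\dots,d\}$ into $m$ parts each in $[0, q^s - 1]$, and a routine inclusion-exclusion on the constraints $\alpha_i \geq q^s$ produces the count $\sum_{j=0}^m (-1)^j \binom{m}{j}\binom{t - jq^s + m - 1}{t - jq^s}$ for fixed $t$, whence the claimed formula for $k$ follows by summing over $t$.

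For the minimum distance, I would first exhibit a codeword achieving the bound. With $d = \nu(q^s - 1) + \mu$, $0 \leq \mu < q^s - 1$, and choosing subsets $A_i \subset \fqs$ of size $q^s - 1$ for $i = 1,\dots,\nu$ and $B \subset \fqs$ of size $\mu$, the polynomial
$$
f = \prod_{i=1}^\nu \prod_{a \in A_i}(x_i - a) \cdot \prod_{b \in B}(x_{\nu+1} - b)
$$
has total degree exactly $d$, and its evaluation is nonzero precisely when each $x_i$ ($i \leq \nu$) takes the unique value of $\fqs \setminus A_i$ and $x_{\nu+1} \notin B$, giving $(q^s - \mu)\,q^{s(m - \nu - 1)}$ nonzero positions.

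For the matching lower bound on the weight of every nonzero codeword I would induct on $m$. The case $m = 1$ reduces to the classical fact that a nonzero polynomial of degree at most $d \leq q^s - 1$ has at most $d$ zeros in $\fqs$. For $m > 1$, given $f \in R_{\leq d}$ with $\ev_{\A}(f) \neq 0$, write $f = \sum_{i=0}^{e} x_m^i g_i$ with $g_i \in \fqs[x_1, \dots, x_{m-1}]$, $\deg g_i \leq d - i$, and $g_e \neq 0$. At each $(a_1, \dots, a_{m-1}) \in \fqs^{m-1}$ where $g_e$ is nonzero, the slice $f(a_1,\dots,a_{m-1}, x_m)$ is a nonzero univariate polynomial of degree $e$ with at most $e$ zeros in $\fqs$; at the remaining points the inductive hypothesis applied to $g_e \in \RM_{d-e}(q^s, m-1)$ bounds the proportion of vanishing slices. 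Combining both estimates yields a two-parameter bound in $e$ and in the Euclidean data of $d-e$ modulo $q^s - 1$, and a finite combinatorial optimization in $e$ shows the maximum number of nonzero positions is $(q^s - \mu)q^{s(m-\nu-1)}$, matching the construction above. The delicate step is arranging this optimization so that the two contributions balance at the extremal value of $e$; this is a bounded calculation but it is where I would spend the most care, since one has to check that the optimum occurs precisely when the exponents of $f$ cluster as in the extremal product and that no other splitting of $d$ over the variables $x_m, \dots, x_1$ produces more nonzero evaluations.
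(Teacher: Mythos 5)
The paper does not prove this statement at all: Theorem \ref{paramRM} is quoted as a classical result with a citation to Delsarte--Goethals--MacWilliams and Kasami--Lin--Peterson, so there is no in-paper argument to compare against. Your sketch is the standard proof of that classical result and is essentially sound: the identification of $\RM_d(q^s,m)$ with the span of the evaluations of reduced monomials of degree at most $d$, the inclusion--exclusion count of weak compositions with parts in $[0,q^s-1]$, the extremal product codeword of weight $(q^s-\mu)q^{s(m-\nu-1)}$, and the slicing induction $wt(f)\geq (q^s-e)\cdot d_1(\RM_{d-e}(q^s,m-1))$ are exactly the ingredients one finds in the cited sources.

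Two points deserve care. First, before writing $f=\sum_{i=0}^{e}x_m^i g_i$ you must replace $f$ by its reduced form modulo $\langle x_1^{q^s}-x_1,\dots,x_m^{q^s}-x_m\rangle$ (which you already justified in the dimension part: this changes neither the evaluation nor increases the total degree), so that $e\leq q^s-1$; otherwise the bound ``at most $e$ zeros in $\fqs$'' is vacuous. Second, the optimization you flag as delicate does close, and more cleanly than you suggest: one only needs the inequality $(q^s-e)\,d_1(\RM_{d-e}(q^s,m-1))\geq (q^s-\mu)q^{s(m-\nu-1)}$ for $0\leq e\leq q^s-1$, which splits into the two Euclidean cases $e\leq\mu$ (where $\nu'=\nu$, $\mu'=\mu-e$, and the claim reduces to $(q^s-e)(q^s-\mu+e)\geq q^s(q^s-\mu)$, i.e.\ $e(\mu-e)\geq 0$) and $e>\mu$ (where $\nu'=\nu-1$, $\mu'=q^s-1+\mu-e$, and the claim reduces to $(q^s-e)(e-\mu+1)\geq q^s-\mu$, which holds since $e-\mu\leq q^s-1-\mu$). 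There is no need to track ``the proportion of vanishing slices'' at points where $g_e$ vanishes; those points are simply discarded. With these two adjustments your argument is complete.
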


\section{A recursive construction for projective Reed-Muller codes}\label{secrecursive}
In this section, we introduce a recursive construction for projective Reed-Muller codes from affine Reed-Muller codes and projective Reed-Muller codes over a smaller projective space. The inspiration behind this idea was the fact that affine Reed-Muller codes can be obtained recursively by using matrix-product codes \cite[Thm. 5.6]{matrixproductcodes}.

To state this recursive construction, we need to consider a specific ordering of the points in $P^m$: we will assume that the first $q^{sm}$ points of $P^m$ are the points of $(\{1\}\times \fqs^m)$ (see decomposition (\ref{descomposicion})). Therefore, we have that the first $q^{sm}$ coordinates of the evaluation of a polynomial in $P^m$ correspond to the evaluation at the points $(\{1\}\times \fqs^m)$, and the rest of coordinates correspond to the evaluation at the points in the second part of the decomposition (\ref{descomposicion}). Let $\xi\in \F_{q^s}$ be a primitive element. We are also going to consider the following decomposition 
\begin{equation}\label{descomposicion2}
\F_{q^s}^m=P^{m-1}\cup \xi\cdot P^{m-1}\cup\cdots\cup \xi^{q^s-2}\cdot P^{m-1}\cup \{(0,\dots,0)\}.
\end{equation}
This decomposition is obtained by noting the following: given a point $Q$ in $\F_{q^s}^m\setminus \{(0,\dots,0\}$, its first nonzero coordinate is equal to $\xi^r$ for some $0\leq r\leq q^s-2$, which implies that $Q\in \xi^r \cdot P^{m-1}$. Therefore, using (\ref{descomposicion}) and (\ref{descomposicion2}) we have
\begin{equation}\label{descomposicion3}
P^m=\left(\{1\}\times \left(P^{m-1}\cup \xi\cdot P^{m-1}\cup\cdots\cup \xi^{q^s-2}\cdot P^{m-1}\cup \{(0,\dots,0)\}\right)\right) \cup \left( \{0\}\times P^{m-1}\right).
\end{equation}
We fix an ordering $\{Q'_1,\dots,Q'_{p_{m-1}}\}$ of the points of $P^{m-1}$. We assume that the first $p_{m-1}$ coordinates of the image of the evaluation map correspond to the evaluation at the points $\{1\}\times P^{m-1}$ (with the fixed ordering for $P^{m-1}$), the following $p_{m-1}$ coordinates correspond to the evaluation at the points $\{1\}\times \xi \cdot P^{m-1}$, etc. We fix this for all the points in $\{1\}\times \fqs^m$, and for the rest of the coordinates, which correspond to the evaluation in $ \{0\}\times P^{m-1}$, we also assume that we are using the same fixed ordering for $P^{m-1}$. 

Hence, for a given ordering of the points of $P^{m-1}$, we fix the ordering of all the points of $P^m$ as shown above. In what follows, we assume that the projective Reed-Muller codes are obtained by using the evaluation map over $P^m$ with this ordering for the points, and the affine Reed-Muller codes are obtained by evaluating in $\fqs^m$, ordered in the same way that we ordered $\{1\}\times \fqs^m$. 

\begin{thm}\label{recursive}
Let $1\leq d\leq m(q^s-1)$ and let $\xi$ be a primitive element in $\F_{q^s}$. We have the following recursive construction:
$$
\PRM_d(m)=\{(u+v_{\xi,d},v)\mid  u\in \RM_{d-1}(m), v\in \PRM_d(m-1)\},
$$
where $v_{\xi,d}:=v\times \xi^d v\times \cdots\times \xi^{(q^s-2)d}v\times \{0\}  =(v,\xi^d v,\xi^{2d}v,\dots,\xi^{(q^s-2)d}v,0)$.
\end{thm}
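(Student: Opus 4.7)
The plan is to prove the two inclusions of the claimed equality directly by matching, entry by entry, the vector $\ev(F)$ for a homogeneous polynomial $F\in S_d$ against a pair of the form $(u+v_{\xi,d},v)$, using the ordering of $P^m$ dictated by (\ref{descomposicion3}) together with the matching ordering of $P^{m-1}$ fixed just before the statement.

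For the forward inclusion, I would decompose any $F\in S_d$ uniquely as $F = x_0\tilde G + F_0$, where $F_0(x_1,\ldots,x_m):=F(0,x_1,\ldots,x_m)$ is homogeneous of degree $d$ in the last $m$ variables and $\tilde G$ is homogeneous of degree $d-1$ in $x_0,\ldots,x_m$. Setting $V:=F_0$, $g(x_1,\ldots,x_m):=\tilde G(1,x_1,\ldots,x_m)\in R_{\leq d-1}$, taking $v$ to be the evaluation of $V$ at the points of $P^{m-1}$ (an element of $\PRM_d(m-1)$), and $u:=\ev_{\mathbb{A}}(g)\in \RM_{d-1}(m)$, I would verify that $\ev(F) = (u+v_{\xi,d},v)$. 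The last $p_{m-1}$ coordinates of $\ev(F)$ correspond to the block $\{0\}\times P^{m-1}$, where $F$ reduces to $F_0=V$, so these entries are precisely $v$. For the first $q^{sm}$ coordinates, the fixed ordering places the evaluation at $(1,\xi^r Q'_i)$ in position $rp_{m-1}+i$ for $0\leq r\leq q^s-2$ and the evaluation at $(1,0,\ldots,0)$ in the last position of this block. Homogeneity of $V$ gives $V(\xi^r Q'_i) = \xi^{rd}V(Q'_i)$, so
\[
F(1,\xi^r Q'_i) = g(\xi^r Q'_i) + \xi^{rd}V(Q'_i),
\]
which matches the corresponding entry of $u+v_{\xi,d}$. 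At the point $(1,0,\ldots,0)$, $V(0,\ldots,0)=0$ since $d\geq 1$, so $F(1,0,\ldots,0) = g(0,\ldots,0)$, matching the last entry of $u$ while $v_{\xi,d}$ contributes its trailing zero.

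For the reverse inclusion, given $u\in\RM_{d-1}(m)$ and $v\in\PRM_d(m-1)$, I would pick preimages $U\in R_{\leq d-1}$ with $\ev_{\mathbb{A}}(U)=u$ and a homogeneous $V\in\fqs[x_1,\ldots,x_m]$ of degree $d$ whose evaluation on $P^{m-1}$ is $v$. Homogenizing $U$ with respect to $x_0$, namely $\tilde U:=x_0^{d-1}U(x_1/x_0,\ldots,x_m/x_0)$, and setting $F:=x_0\tilde U + V\in S_d$, the same coordinate-by-coordinate computation as above yields $\ev(F)=(u+v_{\xi,d},v)$, which gives the other inclusion.

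The entire proof is a careful bookkeeping argument once the single algebraic identity $V(\xi^r Q) = \xi^{rd}V(Q)$, arising from the homogeneity of $V$, is in hand. The main point to handle with care is the position of the origin $(1,0,\ldots,0)$: it lies in the affine chart $\{1\}\times\fqs^m$ rather than in the projective block $\{0\}\times P^{m-1}$, so the trailing zero appended in the definition of $v_{\xi,d}$ must be aligned with this point, leaving $u$ to supply the full contribution to $F(1,0,\ldots,0)$, while the genuine zero coordinate in $x_0$ appears at the far end of $\ev(F)$ via the block $\{0\}\times P^{m-1}$.
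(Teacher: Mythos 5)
Your proposal is correct and takes essentially the same route as the paper: both proofs split a homogeneous polynomial of degree $d$ into its $x_0$-divisible part, which evaluates to $\RM_{d-1}(m)\times\{0\}^{p_{m-1}}$, and its $x_0$-free part, which evaluates to $(v_{\xi,d},v)$ by homogeneity via the identity $V(\xi^r Q)=\xi^{rd}V(Q)$ and the fixed ordering of the points. The only cosmetic difference is that you use the direct decomposition $F=x_0\tilde G+F_0$ and argue both inclusions explicitly, whereas the paper organizes the same argument through the monomial basis $M_0\cup\bigcup_{i=1}^m M_i$ and concludes by linearity.
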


\begin{proof}
Taking into account that $x_0$ divides all the monomials in $M_0$ and the decomposition (\ref{descomposicion}), it is clear that $\langle \ev (M_0) \rangle=\RM_{d-1}(m)\times \{0\}^{p_{m-1}}$. 

Now we consider a monomial $x^\alpha\in \bigcup_{i=1}^m M_i$. Its evaluation in $\{0\}\times P^{m-1}$ (the second part of the decomposition (\ref{descomposicion})) is in $\PRM_d(m-1)$ because $x^\alpha$ only involves the last $m$ variables. In fact, the evaluation of all the monomials in $\bigcup_{i=1}^m M_i$ at the points of $\{0\}\times P^{m-1}$ gives a basis for $\PRM_d(m-1)$. Now assume that the evaluation of $x^\alpha$ at the points $\{0\}\times P^{m-1}$ is $v$. Let $\xi \in \fqs$ be a primitive element. Then, because of the ordering that we have chosen and the decomposition (\ref{descomposicion2}), it is clear that the evaluation of $x^\alpha$ at the points $\{1\}\times \fqs^m$ is precisely $v_{\xi,d}$. We have obtained the evaluation of the monomial $x^\alpha$ at both parts of the decomposition (\ref{descomposicion}). Thus, $\ev(x^\alpha)=(v_{\xi,d},v)$. 

If we have $f\in k[x_0,\dots,x_m]_d/I(\PM)_d\cong \PRM_d(m)$, then it can be written as
$$
f=\sum_i\lambda_i b_i +\sum_i \gamma_i a_i, b_i\in M_0,a_i\in \bigcup_{i=1}^m M_i,\lambda_i,\gamma_i\in\F_{q^s}.
$$
We have seen that $\ev(b_i)=u\times \{0\}^{p_{m-1}}$, for some $u\in \RM_{d-1}(m)$, and $\ev(a_i)=(v_{\xi,d},v)$, for some $v\in \PRM_d(m-1)$. We finish the proof by considering the linearity of the evaluation map. 
\end{proof}

\begin{rem}\label{remuplusv}
For $d\equiv 0 \bmod q^s-1$, the construction is simpler and, to some extent, resembles the $(u\mid u +v)$ construction:
$$
\PRM_d(m)=\{(u+v_{\xi,d},v), u\in \RM_{d-1}(m), v\in \PRM_d(m-1)\},
$$
with $v_{\xi,d}=(v,v,\dots,v,0)$.
\end{rem}

With the construction from Theorem \ref{recursive} we can recover the dimension of projective Reed-Muller codes from Theorem \ref{paramPRM} in a different way, which was already noted in \cite[Lem. 9]{renteriaRM}.

\begin{cor}\label{corodimRM}
We have that
$$
\dim(\PRM_d(m))=\dim(\RM_{d-1}(m))+\dim(\PRM_d(m-1)).
$$
\end{cor}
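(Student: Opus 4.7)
The plan is to read the dimension directly off the recursive decomposition in Theorem \ref{recursive}, by exhibiting a linear bijection
\[
\Phi:\RM_{d-1}(m)\oplus \PRM_d(m-1)\longrightarrow \PRM_d(m),\qquad (u,v)\longmapsto (u+v_{\xi,d},v),
\]
so that the dimension identity follows immediately from rank-nullity.

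First I would note that $\Phi$ is well-defined and $\F_{q^s}$-linear, since $v\mapsto v_{\xi,d}$ is linear (each block $\xi^{id}v$ depends linearly on $v$, with a trailing zero coordinate appended). Surjectivity of $\Phi$ is exactly the content of Theorem \ref{recursive}. For injectivity, suppose $\Phi(u,v)=(u+v_{\xi,d},v)=0$. Looking at the last $p_{m-1}$ coordinates forces $v=0$, hence $v_{\xi,d}=0$, and then the first $q^{sm}$ coordinates give $u=0$. Thus $\Phi$ is an isomorphism of $\F_{q^s}$-vector spaces, and
\[
\dim \PRM_d(m) \;=\; \dim \RM_{d-1}(m) \;+\; \dim \PRM_d(m-1).
\]

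There is no real obstacle here: the only subtlety is being careful about the coordinate order fixed just before Theorem \ref{recursive}, which ensures that the ``$v$-part'' of $\Phi(u,v)$ really does sit in the last $p_{m-1}$ positions and can therefore be isolated to deduce $v=0$. Once that is in place, the argument is a one-line consequence of the recursive construction.
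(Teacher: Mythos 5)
Your proof is correct and follows essentially the same route as the paper: both read the dimension off the decomposition of Theorem \ref{recursive} by splitting codewords into the $v=0$ part and the $u=0$ part. Your phrasing via the explicit isomorphism $\Phi$ makes the injectivity (i.e., that the two pieces meet trivially) slightly more explicit than the paper's wording, but the underlying argument is identical.
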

\begin{proof}
Using the notation in Theorem \ref{recursive}, we have $\dim(\RM_{d-1}(m))$ linearly independent vectors corresponding to $v=0$, and we have $\dim(\PRM_d(m-1))$ linearly independent vectors corresponding to $u=0$. By the construction from Theorem \ref{recursive}, every codeword of $\PRM_d(m)$ can be obtained as the sum of a vector with $u=0$ and a vector with $v=0$, and we obtain the result.
\end{proof}

\section{Subfield subcodes of projective Reed-Muller codes}\label{secssc}

As we stated at the beginning of the previous section, in some cases the recursive construction from Theorem \ref{recursive} resembles the $(u\mid u+v)$ construction. It is not hard to check that, given two codes $C_1,C_2$, the subfield subcode of the resulting code after using the $(u\mid u+v)$ construction with $C_1$ and $C_2$ is equal to the code obtained by applying the $(u\mid u+v)$ construction to $C_1^\sigma$ and $C_2^\sigma$ (recall that $C^\sigma=C\cap \fq^n$). Therefore, one may wonder if we can use the construction from Theorem \ref{recursive} to obtain results about the subfield subcodes of projective Reed-Muller codes, or even a recursive construction for them, which is what we study in this section. In \cite{sanjoseSSCPRM} the subfield subcodes of projective Reed-Muller codes were studied, mainly for the case $m=2$. Our approach in this section can be applied recursively for any $m$, and for the case $m=2$ our method provides an easier way to obtain the basis of the subfield subcode for some degrees. We start with a result about the minimum distance and dimension of the subfield subcodes of projective Reed-Muller codes. 

\begin{cor}
Let $1\leq d\leq m(q^s-1)$. We have the following inequalities:
$$
\begin{aligned}
d_1(\RM&_{d-1}(m))\leq d_1(\PRM_d^\sigma(m))\leq d_1(\RM_{d-1}^\sigma(m)),\\
&\dim(\PRM_d^\sigma (m))\geq \dim(\RM_{d-1}^\sigma(m)),
\end{aligned}
$$
and the last inequality is strict if $\PRM_d^\sigma (m)$ is non-degenerate.
\end{cor}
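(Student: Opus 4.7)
The plan is to leverage the recursive construction of Theorem \ref{recursive} to locate a natural subcode of $\PRM_d^\sigma(m)$ and to exploit the coincidence of the minimum distances of $\PRM_d(m)$ and $\RM_{d-1}(m)$. First I would observe that setting $v=0$ in Theorem \ref{recursive} gives $\RM_{d-1}(m) \times \{0\}^{p_{m-1}} \subset \PRM_d(m)$, and intersecting with $\fq^n$ yields the crucial inclusion
\[
\RM_{d-1}^\sigma(m) \times \{0\}^{p_{m-1}} \subset \PRM_d^\sigma(m).
\]
From this, both the dimension inequality $\dim(\PRM_d^\sigma(m)) \geq \dim(\RM_{d-1}^\sigma(m))$ and the upper bound $d_1(\PRM_d^\sigma(m)) \leq d_1(\RM_{d-1}^\sigma(m))$ follow at once, since a subcode has dimension at most, and minimum distance at least, those of the ambient code.

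For the remaining lower bound on the minimum distance, I would use the containment $\PRM_d^\sigma(m) \subset \PRM_d(m)$ to get $d_1(\PRM_d^\sigma(m)) \geq d_1(\PRM_d(m))$. A direct comparison of Theorems \ref{paramPRM} and \ref{paramRM} then shows that $d_1(\PRM_d(m)) = d_1(\RM_{d-1}(m))$: both formulas reduce to $(q^s-\mu)q^{s(m-\nu-1)}$ with $d-1 = \nu(q^s-1)+\mu$, $0 \leq \mu < q^s-1$, so the two minimum distances are literally given by the same expression.

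Finally, for the strict inequality when $\PRM_d^\sigma(m)$ is non-degenerate, I would argue by contrapositive. Equality in the dimension inequality, combined with the displayed inclusion, would force $\PRM_d^\sigma(m) = \RM_{d-1}^\sigma(m) \times \{0\}^{p_{m-1}}$, so every codeword of $\PRM_d^\sigma(m)$ would vanish on each of the last $p_{m-1}$ coordinates, contradicting non-degeneracy. I do not foresee any real obstacle: the entire argument reduces to extracting the correct inclusion from Theorem \ref{recursive} and then combining it with standard facts about subcodes and with the explicit identification of the two minimum distance formulas.
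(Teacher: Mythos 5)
Your proposal is correct and follows essentially the same route as the paper: the inclusion $\RM_{d-1}^\sigma(m)\times\{0\}^{p_{m-1}}\subset\PRM_d^\sigma(m)$ extracted from Theorem \ref{recursive} gives the dimension bound and the upper bound on $d_1$, while $\PRM_d^\sigma(m)\subset\PRM_d(m)$ together with the coincidence of the minimum-distance formulas in Theorems \ref{paramPRM} and \ref{paramRM} gives the lower bound. Your contrapositive argument for strictness in the non-degenerate case is also sound and in fact spells out a step the paper leaves implicit.
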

\begin{proof}
We have that $d_1(\PRM_d(m))\leq d_1(\PRM_d^\sigma(m))$ because $\PRM_d^\sigma(m)\subset \PRM_d(m)$, and one can check that $d_1(\PRM_d(m))=d_1(\RM_{d-1}(m))$ using Theorem \ref{paramPRM} and Theorem \ref{paramRM}. For the other inequalities, by Theorem \ref{recursive} we obtain $\RM_{d-1}(m)\times \{0\}^{p_{m-1}}\subset \PRM_d(m)$, which implies $\RM^\sigma_{d-1}(m)\times \{0\}^{p_{m-1}}\subset \PRM^\sigma_d(m)$.
\end{proof}

In most of the non-degenerate cases we have the equality for the three minimum distances in the previous result, although the bound is not always sharp as it was seen in \cite{sanjoseSSCPRM}. In the non-degenerate case we have $\dim(\PRM_d^\sigma (m))>\dim(\RM_{d-1}^\sigma(m))$, which may also be true in many degenerate cases, as one can check for the case $m=2$ in \cite[Cor 3.41]{sanjoseSSCPRM}.

For some specific degrees we obtain a recursive construction for the subfield subcodes of projective Reed-Muller codes, which in turn allows us to give more precise results about the parameters. 

\begin{cor}\label{recursivessc}
Let $\xi\in \fqs$ be a primitive element. Let $m>1$ and let ${d_\lambda}=\lambda\frac{q^s-1}{q-1}$ for some $\lambda\in \{ 1,2,\dots,m(q-1)\}$. Then we have
$$
\PRM_{d_\lambda}^\sigma (m)=\{(u+v_{\xi,d_\lambda},v), u\in \RM_{d_\lambda-1}^\sigma(m), v\in \PRM_{d_\lambda}^\sigma(m-1)\}.
$$
As a consequence, we obtain:
$$
\dim(\PRM_{d_\lambda}^\sigma (m))= \dim(\RM_{d_\lambda-1}^\sigma(m))+\dim(\PRM_{d_\lambda}^\sigma(m-1)).
$$
\end{cor}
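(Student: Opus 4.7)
My plan is to apply Theorem~\ref{recursive} and intersect both sides with $\fq^n$, showing that for the specific degrees $d_\lambda$ the recursive decomposition is compatible with $\fq$-rationality. The decisive arithmetic fact is that $\xi^{d_\lambda}\in\fq$: indeed, $\fq^*$ is the cyclic subgroup of $\fqs^*$ generated by $\xi^{(q^s-1)/(q-1)}$, so $\xi^t\in\fq$ iff $\frac{q^s-1}{q-1}\mid t$, and by the form $d_\lambda=\lambda\frac{q^s-1}{q-1}$ this divisibility holds. As a consequence, if $v\in\fq^{p_{m-1}}$ then every block $\xi^{jd_\lambda}v$ has entries in $\fq$, so the twisted vector $v_{\xi,d_\lambda}=(v,\xi^{d_\lambda}v,\dots,\xi^{(q^s-2)d_\lambda}v,0)$ lies in $\fq^{q^{sm}}$.

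With that observation in hand, the inclusion $\supseteq$ is immediate: for $u\in\RM_{d_\lambda-1}^\sigma(m)$ and $v\in\PRM_{d_\lambda}^\sigma(m-1)$, Theorem~\ref{recursive} places $(u+v_{\xi,d_\lambda},v)$ inside $\PRM_{d_\lambda}(m)$, and each of the three summands has entries in $\fq$. Conversely, I take $c=(w_1,w_2)\in\PRM_{d_\lambda}^\sigma(m)$; Theorem~\ref{recursive} decomposes it uniquely as $(u+v_{\xi,d_\lambda},v)$ with $v=w_2\in\PRM_{d_\lambda}(m-1)$ and $u\in\RM_{d_\lambda-1}(m)$. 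Since $w_2\in\fq^{p_{m-1}}$, we obtain $v\in\PRM_{d_\lambda}^\sigma(m-1)$; then $v_{\xi,d_\lambda}\in\fq^{q^{sm}}$ by the first paragraph and $w_1\in\fq^{q^{sm}}$ by hypothesis, so $u=w_1-v_{\xi,d_\lambda}\in\fq^{q^{sm}}\cap\RM_{d_\lambda-1}(m)=\RM_{d_\lambda-1}^\sigma(m)$.

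For the dimension formula I would mimic the argument of Corollary~\ref{corodimRM}: the decomposition $(u+v_{\xi,d_\lambda},v)$ is unique, since the last $p_{m-1}$ coordinates determine $v$ and then $u$ is forced by the first $q^{sm}$ coordinates. Hence the pieces obtained by setting $v=0$ and $u=0$ separately are independent subspaces of $\PRM_{d_\lambda}^\sigma(m)$ of dimensions $\dim\RM_{d_\lambda-1}^\sigma(m)$ and $\dim\PRM_{d_\lambda}^\sigma(m-1)$, and by the first part their direct sum is all of $\PRM_{d_\lambda}^\sigma(m)$. I do not anticipate any real obstacle: beyond Theorem~\ref{recursive}, the entire content of the corollary is the one-line observation that the particular shape of $d_\lambda$ is exactly what forces $\xi^{d_\lambda}\in\fq$, and hence makes the recursive construction descend cleanly to subfield subcodes.
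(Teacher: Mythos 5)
Your proof is correct and follows essentially the same route as the paper: establish $\xi^{d_\lambda}\in\fq$ (the paper does this by computing $(\xi^{d_\lambda})^{q-1}=\xi^{\lambda(q^s-1)}=1$, an equivalent one-liner to your subgroup argument), then read off both inclusions from Theorem~\ref{recursive} and the uniqueness of the decomposition, and count dimensions as in Corollary~\ref{corodimRM}. No gaps.
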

\begin{proof}
We have that
$$
(\xi^{d_\lambda})^{q-1}=\xi^{\lambda (q^s-1)}=1 \implies \xi^{d_\lambda}\in \F_q.
$$
Then it is clear that for any $u\in \RM_{d_\lambda-1}^\sigma(m)$, $v\in \PRM_{d_\lambda}^\sigma(m-1)$, we have $v_{\xi,d_\lambda}\in \fq^n$ and $(u+v_{\xi,d_\lambda},v)\in \PRM_{d_\lambda}^\sigma(m)$ because of Theorem \ref{recursive}.

On the other hand, if we have $w\in\PRM_{d_\lambda}^\sigma(m)$, by Theorem \ref{recursive} we know that $w$ is of the form $(u+v_{\xi,d_\lambda},v)$, for $u\in \RM_{d-1}(m), v\in \PRM_d(m-1)$, and its coordinates are in $\F_q$. Therefore, $v$ must have its coordinates in $\fq$, i.e., $v\in \PRM_d^\sigma(m-1)$. Moreover, taking into account that $\xi^{d_\lambda}\in \F_q$, we also get that $v_{\xi,d_\lambda}\in \fq^n$, which implies that $u \in \RM_{d-1}^\sigma(m)$. Arguing as in Corollary \ref{corodimRM} we obtain the formula for the dimension. 
\end{proof}

\begin{rem}
Note that the hypothesis about the degrees in Corollary \ref{recursivessc} is necessary. For instance, in \cite[Cor. 3.41]{sanjoseSSCPRM} it can be seen that the formula for the dimension from Corollary \ref{recursivessc} does not hold in general when $d\neq d_\lambda$ for any $\lambda\in \{ 1,2,\dots,m(q-1)\}$.
\end{rem}

In all the cases from Corollary \ref{recursivessc} we can obtain a set of polynomials such that their image by the evaluation map is a basis for $\PRM_{d_\lambda}^\sigma(m)$ in a straightforward manner. To do so, for a given degree $d>0$, we define the homogenization up to degree $d$ of a polynomial $f\in \fqs[x_1,\dots,x_m]$ with degree $\deg(f)<d$ as $f^h=x_0^df(x_1/x_0)\in \fqs[x_0,\dots,x_m]_d$. Note that, with this definition, we always have that $x_0$ divides $f^h$. 

\begin{cor}\label{recursivesscbase}
Let ${d_\lambda}=\lambda\frac{q^s-1}{q-1}$ for some $\lambda\in \{ 1,2,\dots,m(q-1)\}$. Let $\{f_i\}_i\subset \fqs[x_1,\dots,x_m]_{\leq d-1}$ (resp. $\{g_j\}_j\subset \fqs[x_1,\dots,x_m]_d$) be a set of polynomials such that their evaluation in $\fqs^m$ (resp. $P^{m-1}$) is a basis for $\RM_{d-1}^\sigma(m)$ (resp. $\PRM_d^\sigma(m-1)$). Then the image by the evaluation map of $\{f_i^h\}_i\cup \{g_j\}_j\subset \fqs[x_0,\dots,x_m]$ over $P^m$ is a basis for $\PRM_d^\sigma(m)$. 
\end{cor}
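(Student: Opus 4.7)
The plan is to leverage the recursive description of $\PRM_{d_\lambda}^\sigma(m)$ from Corollary \ref{recursivessc} and verify that the evaluations of the two proposed families of polynomials populate exactly the two summands appearing in that description. Concretely, I will show that the evaluations of $\{f_i^h\}_i$ exhaust the ``$v=0$'' part $\RM_{d_\lambda-1}^\sigma(m)\times \{0\}^{p_{m-1}}$, while the evaluations of $\{g_j\}_j$ have the form $(v_{\xi,d_\lambda}, v)$ as $v$ ranges over a basis of $\PRM_{d_\lambda}^\sigma(m-1)$. Combined with the dimension identity from Corollary \ref{recursivessc}, this will give a basis.

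For the first family, the key point is that $x_0$ divides every monomial of $f_i^h$, so $\ev(f_i^h)$ vanishes on the block $\{0\}\times P^{m-1}$. On the block $\{1\}\times \fqs^m$, the identity $f_i^h(1,Q)=f_i(Q)$ shows that the first $q^{sm}$ coordinates agree with $\ev_{\mathbb{A}}(f_i)\in \RM_{d_\lambda-1}^\sigma(m)$. Hence $\ev(f_i^h)=(\ev_{\mathbb{A}}(f_i),0)$, which lies in $\PRM_{d_\lambda}^\sigma(m)$ by Corollary \ref{recursivessc} (taking $v=0$).

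For the second family, set $v_j=\ev_{P^{m-1}}(g_j)\in \PRM_{d_\lambda}^\sigma(m-1)$. Since $g_j$ does not involve $x_0$, the block $\{0\}\times P^{m-1}$ yields precisely $v_j$. On the block $\{1\}\times \fqs^m$, I will exploit the refined decomposition (\ref{descomposicion3}): on each subblock $\{1\}\times \xi^r P^{m-1}$, the homogeneity of $g_j$ gives $g_j(\xi^r Q')=\xi^{rd_\lambda}g_j(Q')$, producing the subblock $\xi^{rd_\lambda}v_j$; the remaining point $(1,0,\ldots,0)$ contributes $0$ because $g_j$ has positive degree and no $x_0$. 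Concatenating these subblocks recovers exactly $v_{\xi,d_\lambda}$, so $\ev(g_j)=(v_{\xi,d_\lambda},v_j)$. Since $\xi^{d_\lambda}\in \fq$ (as in the proof of Corollary \ref{recursivessc}), this vector does indeed lie in $\fq^n$.

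Finally, by Corollary \ref{recursivessc} every $w\in \PRM_{d_\lambda}^\sigma(m)$ has the form $(u+v_{\xi,d_\lambda},v)$ with $u\in \RM_{d_\lambda-1}^\sigma(m)$ and $v\in \PRM_{d_\lambda}^\sigma(m-1)$. Writing $u=\sum_i \alpha_i \ev_{\mathbb{A}}(f_i)$ and $v=\sum_j \beta_j v_j$ with $\alpha_i,\beta_j\in \fq$, and using the $\fq$-linearity of the map $v\mapsto v_{\xi,d_\lambda}$, one checks that $w=\sum_i\alpha_i\ev(f_i^h)+\sum_j\beta_j\ev(g_j)$, so the proposed set spans. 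Linear independence then follows block-by-block: the last $p_{m-1}$ coordinates isolate the $\beta_j$ (via independence of the $v_j$ in $\PRM_{d_\lambda}^\sigma(m-1)$), after which the first block isolates the $\alpha_i$. The cardinality matches the dimension formula from Corollary \ref{recursivessc}, so the set is a basis. The main subtlety is recognising the explicit shape $v_{\xi,d_\lambda}$ on the affine block, for which the homogeneity of $g_j$ together with the decomposition (\ref{descomposicion2}) is essential; everything else is bookkeeping.
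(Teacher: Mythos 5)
Your proposal is correct and follows essentially the same route as the paper: identify $\ev(f_i^h)$ with the ``$v=0$'' vectors and $\ev(g_j)$ with the vectors $(v_{\xi,d_\lambda},v)$, then invoke the recursive description from Corollary \ref{recursivessc}. You merely spell out the computation of $\ev(g_j)$ (which the paper defers to the proof of Theorem \ref{recursive}) and make the spanning/independence check explicit.
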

\begin{proof}
Assume that we have the sets $\{f_i\}_i$ and $\{g_j\}_j$ as in the statement. Then we have that $\{f_i^h\}_i$ is a set of homogeneous polynomials of degree $d$, and the image by the evaluation map of this set generates $\RM_{d-1}^\sigma(m)\times \{0\}^{p_{m-1}}$, i.e., the vectors with $v=0$ of the recursive construction from Corollary \ref{recursivessc}. 

Let $\xi\in \fqs$ be a primitive element. If the evaluation of a polynomial of $\{g_j\}_j$ over $P^{m-1}$ is $v\in \PRM^\sigma_d(m-1)$, then, when regarding this polynomial in $\fqs[x_0,\dots,x_m]$, the evaluation over $P^m$ is precisely $(v_{\xi,d},v)$ (this is the idea of the proof of Theorem \ref{recursive}). Therefore, the image by the evaluation map of $\{f_i^h\}_i\cup \{g_j\}_j$ over $P^m$ generates 
$$
\{(u+v_{\xi,d_\lambda},v), u\in \RM_{d_\lambda-1}^\sigma(m), v\in \PRM_{d_\lambda}^\sigma(m-1)\},
$$
which is $\PRM_d^\sigma(m)$ by Corollary \ref{recursivessc}.
\end{proof}

In both Corollary \ref{recursivessc} and Corollary \ref{recursivesscbase}, in order to use the recursion we need to obtain bases for affine and projective Reed-Muller codes for some $m\geq 1$. Sets of polynomials whose evaluation are a basis for the subfield subcodes of affine Reed-Muller codes are known for any number of variables \cite[Thm. 11]{galindolcd}, and for projective Reed-Muller codes we also know how to obtain such sets of polynomials in the case of projective Reed-Solomon codes \cite{sanjoseSSCPRS} and in the case of projective Reed-Muller codes over the projective plane \cite{sanjoseSSCPRM}. For $m>2$, we can apply Corollary \ref{recursivessc} and Corollary \ref{recursivesscbase} recursively until we reach the known cases of $m=1$ or $m=2$. 

\begin{ex}
Let $q^s=4$, $d=3$, $m=3$, and let $\xi$ be a primitive element of $\F_4$. We are going to obtain a set of polynomials such that its image by the evaluation map is a basis for $\PRM_d^\sigma(m)$. Using Corollary \ref{recursivesscbase}, we need to compute a basis for $\RM_{2}^\sigma(3)$ and $\PRM_3^\sigma(2)$. From \cite[Thm. 11]{galindolcd}, we obtain that the image by the evaluation map over $\F_4^3$ of the following polynomials is a basis for $\RM_2^\sigma(3)$:
$$
B_1=\{ 1,x_1+x_1^2,\xi x_1+\xi^2x_1^2,x_2+x_2^2,\xi x_2+\xi^2x_2^2,x_3+x_3^2,\xi x_3+\xi^2x_3^2 \}.
$$
From \cite{sanjoseSSCPRM}, the image by the evaluation map over $P^2$ of the polynomials 

$$
\begin{aligned}
B_2=\{ x_{1}^3, x_{1} x_{3}^{2} + x_{1}^2 x_{3}, (\xi + 1) x_{1} x_{3}^{2} + \xi x_{1}^2 x_{3}, x_{1} x_{2}^{2} + x_{1}^2 x_{2}, (\xi + 1) x_{1} x_{2}^{2} + \xi x_{1}^2 x_{2}, x_{2}^{3},\\x_{2}^{2} x_{3} + x_{2} x_{3}^{2}, 
(\xi + 1) x_{2}^{2} x_{3} + \xi x_{2} x_{3}^{2}, x_{3}^{3}\}
\end{aligned}
$$
forms a basis for $\PRM_3^\sigma(2)$. By Corollary \ref{recursivesscbase}, the image by the evaluation map over $P^3$ of $B_1^h\cup B_2$ is a basis for $\PRM_3^\sigma(3)$, where
$$
\begin{aligned}
B_1^h=\{ x_0^3,x_0^2x_1+x_0x_1^2,\xi x_0^2x_1+\xi^2x_0x_1^2,x_0^2x_2+x_0x_2^2,\xi x_0^2x_2+\xi^2x_0x_2^2,x_0^2x_3+x_0x_3^2,&\\
\xi x_0^2x_3+\xi^2x_0x_3^2 \}.&
\end{aligned}
$$
\end{ex}

As the dual of a projective Reed-Muller codes is another projective Reed-Muller code for $d\not\equiv 0\bmod q^s-1$ (by Theorem \ref{dualPRM}), the recursive construction from Theorem \ref{recursive} can be used for the dual codes, except for the case $d\equiv 0\bmod q^s-1$. However, it is not true in general that $\PRM_d^{\sigma,\perp}(m):=(\PRM_d^{\sigma}(m))^\perp$ is equal to $\PRM_d^{\perp,\sigma}(m):=(\PRM_d^{\perp}(m))^\sigma$. Therefore, the construction from Corollary \ref{recursivessc}  does not apply to the dual code of the subfield subcode of a projective Reed-Muller code. Nevertheless, in the next result we show that the dual codes can also be obtained from a similar recursive construction.

\begin{prop}\label{recursivedualssc}
Let $\xi\in \fqs$ be a primitive element. Let $m>1$ and let ${d_\lambda}=\lambda\frac{q^s-1}{q-1}$ for some $\lambda\in \{ 1,2,\dots,m(q-1)\}$. Then we have
$$
\PRM_{d_\lambda}^{\sigma,\perp} (m)=\{(u^t,v^t-u_{\xi,d}^t), u^t\in \RM_{d_\lambda-1}^{\sigma,\perp}(m), v^t\in \PRM_{d_\lambda}^{\sigma,\perp}(m-1)\},
$$
where, taking into account the decomposition (\ref{descomposicion2}), if we have $u^t=(u^t_0,u^t_1,\dots,u^t_{q^s-2},u^t_{q^s-1})$, with $u^t_0,\dots,u^t_{q^s-2}\in \F_q^{p_{m-1}}$ and $u^t_{q^s-1}\in \F_q$, then 
$$
u_{\xi,d}^t:=u_0^t+\xi^d u_1^t+\cdots + \xi^{(q^s-2)d}u_{q^s-2}^t=\sum_{i=0}^{q^s-2}\xi^{i\cdot d} u_i^t. 
$$
\end{prop}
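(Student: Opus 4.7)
The plan is to establish the claimed equality by a double-containment-via-dimensions argument: first verify that every element of the right-hand side lies in $\PRM_{d_\lambda}^{\sigma,\perp}(m)$, then show that the two $\fq$-vector spaces have the same dimension. The key arithmetic fact to invoke throughout is that $\xi^{d_\lambda}\in\fq$, which was already established in the proof of Corollary \ref{recursivessc}; this ensures that the vector $u_{\xi,d}^t$ has all its coordinates in $\fq$, so that elements of the right-hand side actually live in $\fq^{p_m}$.

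For the orthogonality, I would take an arbitrary element $c=(u+v_{\xi,d_\lambda},v)\in\PRM_{d_\lambda}^\sigma(m)$ from the description in Corollary \ref{recursivessc} and an arbitrary $c^t=(u^t,v^t-u_{\xi,d}^t)$ from the proposed right-hand side, and split the inner product according to the decomposition $p_m=q^{sm}+p_{m-1}$:
$$
\langle c,c^t\rangle=\langle u,u^t\rangle+\langle v_{\xi,d_\lambda},u^t\rangle+\langle v,v^t\rangle-\langle v,u_{\xi,d}^t\rangle.
$$
The first term vanishes because $u\in\RM_{d_\lambda-1}^\sigma(m)$ and $u^t\in\RM_{d_\lambda-1}^{\sigma,\perp}(m)$, and the third term vanishes because $v\in\PRM_{d_\lambda}^\sigma(m-1)$ and $v^t\in\PRM_{d_\lambda}^{\sigma,\perp}(m-1)$. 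The crucial cancellation is between the second and fourth terms: using the decomposition (\ref{descomposicion2}) blockwise and $v_{\xi,d_\lambda}=(v,\xi^{d_\lambda}v,\dots,\xi^{(q^s-2)d_\lambda}v,0)$, one gets $\langle v_{\xi,d_\lambda},u^t\rangle=\sum_{i=0}^{q^s-2}\xi^{id_\lambda}\langle v,u_i^t\rangle=\langle v,u_{\xi,d}^t\rangle$, which exactly cancels the last term. This shows the right-hand side is contained in $\PRM_{d_\lambda}^{\sigma,\perp}(m)$.

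For the dimension count I would first observe that the map $(u^t,v^t)\mapsto(u^t,v^t-u_{\xi,d}^t)$ is $\fq$-linear and injective (if the image vanishes, then $u^t=0$ forces $u_{\xi,d}^t=0$ and hence $v^t=0$), so the right-hand side has dimension
$$
\dim\RM_{d_\lambda-1}^{\sigma,\perp}(m)+\dim\PRM_{d_\lambda}^{\sigma,\perp}(m-1)=\bigl(q^{sm}-\dim\RM_{d_\lambda-1}^\sigma(m)\bigr)+\bigl(p_{m-1}-\dim\PRM_{d_\lambda}^\sigma(m-1)\bigr).
$$
On the other hand, Corollary \ref{recursivessc} together with $p_m=q^{sm}+p_{m-1}$ gives
$$
\dim\PRM_{d_\lambda}^{\sigma,\perp}(m)=p_m-\dim\RM_{d_\lambda-1}^\sigma(m)-\dim\PRM_{d_\lambda}^\sigma(m-1),
$$
and these two expressions coincide. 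Combined with the inclusion from the previous paragraph, this yields the equality of codes and completes the proof.

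The main obstacle is purely bookkeeping: keeping track of which block of coordinates of $u^t$ pairs with which scaled copy of $v$ inside $v_{\xi,d_\lambda}$, so that the cross terms match up and cancel. Once the blockwise decomposition is written down explicitly, the factor $\xi^{id_\lambda}\in\fq$ can be pulled out of the $\fq$-bilinear pairing on each block, and the cancellation becomes transparent.
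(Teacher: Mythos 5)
Your proposal is correct and follows essentially the same route as the paper's proof: both verify that the cross terms $\langle v_{\xi,d_\lambda},u^t\rangle$ and $\langle v,u_{\xi,d}^t\rangle$ cancel via the blockwise decomposition (\ref{descomposicion2}), and both conclude equality by matching the dimension of the right-hand side against $\dim\PRM_{d_\lambda}^{\sigma,\perp}(m)$ computed from Corollary \ref{recursivessc}. Your explicit injectivity check for the map $(u^t,v^t)\mapsto(u^t,v^t-u_{\xi,d}^t)$ is a detail the paper leaves implicit, but the argument is the same.
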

\begin{proof}
The vector space $\{(u^t,v^t-u_{\xi,d}^t), u^t\in \RM_{d_\lambda-1}^{\sigma,\perp}(m), v^t\in \PRM_{d_\lambda}^{\sigma,\perp}(m-1)\}$ has dimension $\dim \RM_{d_\lambda-1}^{\sigma,\perp}(m)+\dim \PRM_{d_\lambda}^{\sigma,\perp}(m-1)$, which is the dimension of $\PRM_{d_\lambda}^{\sigma,\perp}(m)$ according to Corollary \ref{recursivessc}. Therefore, if we consider $u\in \RM_{d_\lambda-1}^\sigma(m)$, $v\in \PRM_{d_\lambda}^{\sigma}(m-1)$, $u^t\in \RM_{d_\lambda-1}^{\sigma,\perp}(m)$ and $v^t\in \PRM_{d_\lambda}^{\sigma,\perp}(m-1)$, by Corollary \ref{recursivessc} we just need to verify that 
$$
(u+v_{\xi,d},v)\cdot (u^t,v^t-u_{\xi,d}^t)= v_{\xi,d}\cdot u^t -v\cdot u^t_\xi=0.  
$$
By considering the decomposition (\ref{descomposicion2}), we can divide the vector $u^t$ as in the statement of this result, and the previous expression can be written as
$$
v_{\xi,d}\cdot u^t -v\cdot u^t_\xi=v\cdot u^t_0+ \xi^d v\cdot u^t_1+\cdots+\xi^{(q^s-2)d}v\cdot u^t_{q^s-2}-v\cdot (u_0^t+\xi^d u_1^t+\cdots + \xi^{(q^s-2)d}u_{q^s-2}^t),
$$
which is equal to 0. 
\end{proof}

As with the recursive construction from Corollary \ref{recursivessc}, the previous result can be used recursively because we know bases for $\RM_{d_\lambda-1}^{\sigma,\perp}(m)$ (for example, see \cite{galindostabilizer}), and also for $\PRM_{d_\lambda}^{\sigma,\perp}(m)$, for $m=1$ and $m=2$, see \cite{sanjoseSSCPRM,sanjoseSSCPRS}.

\subsection{Examples}
In this subsection we show that we can obtain good parameters with the subfield subcodes of projective Reed-Muller codes appearing in Corollary \ref{recursivessc} and Proposition \ref{recursivedualssc}. For $m=2$, the codes arising from Corollary \ref{recursivessc} are a particular case of the codes studied in \cite{sanjoseSSCPRM}. However, for the degrees considered in Corollary \ref{recursivessc} we have an easier construction, and we show in Table \ref{tablagoodparam} that many of the codes with good parameters from \cite{sanjoseSSCPRM} correspond precisely to the codes from Corollary \ref{recursivessc} or their duals from Proposition \ref{recursivedualssc}. For the minimum distance of the codes from Corollary \ref{recursivessc} we can use the bound $d_1(\PRM_d^\sigma(m))\geq d_1(\PRM_d(m))$, and for the duals we can compute the minimum distance with Magma \cite{magma}. All codes presented in Table \ref{tablagoodparam} exceed the Gilbert-Varshamov bound, and some of them have the best known parameters according to codetables.de \cite{codetables}, as stated in \cite{sanjoseSSCPRM}. 

\begin{table}[ht]
\caption{Codes with good parameters appearing in \cite{sanjoseSSCPRM} that can be obtained from Corollary \ref{recursivessc} or Proposition \ref{recursivedualssc}.} 
\label{tablagoodparam}
\centering
\begin{tabular}{||c|c|c|c|c||c|c|c||}
 \hline 
 $q$ & $s$ & $m$ &$\lambda $ & Result &$n$ & $k$ & $d_1(C) \geq $  \\
  \hline \hline
2&2& 2& 1& \ref{recursivessc} & 21&9&8\\
2&2&2& 1& \ref{recursivedualssc} & 21&12&5\\
2&2&3&1&\ref{recursivessc} &85&16&32\\
2&2&3&2&\ref{recursivedualssc} &85&25&21\\
2&2&3&2&\ref{recursivessc} &85&60&8\\
2&2&3&1&\ref{recursivedualssc} &85&69&5\\
3&9&2&1&\ref{recursivessc} &91&9&54\\
3&9&2&1&\ref{recursivedualssc} &91&82&4\\
4 & 2&2& 1&\ref{recursivessc} & 273 & 9 & 192 \\
5 & 2 &2& 1 &\ref{recursivessc} & 651 & 9 & 500\\
7 & 2 &2& 1 & \ref{recursivessc} &2451 & 9 & 2058 \\
\hline
\end{tabular}
\end{table}

Furthermore, as the recursive approach from this work allows us to work easily with $m>2$, we can also provide examples of good parameters which do not appear in \cite{sanjoseSSCPRM}. As these codes are very long when we increase $q^s$, we only provide a few examples that still have moderate lengths. For the extension $\F_4\supset \F_2$ and $m=4$, for $\lambda=1,2,$ from Corollary \ref{recursivessc} we obtain the codes $[341,25,128]_2$ and $[341,295,8]_2$, respectively, which surpass the Gilbert-Varshamov bound. The dual of the code with parameters $[341,25,128]_2$ has parameters $[341,316,5]_2$, which also exceed the Gilbert-Varshamov bound. 

For the extension $\F_9\supset \F_3$ and $m=3$, we can consider $\lambda=1$ in Corollary \ref{recursivessc}, which gives a code with parameters $[820,16,486]_3$. Its dual has parameters $[820,804,4]_3$, and both of them surpass the Gilbert-Varshamov bound. We also note that, as we claimed in the introduction, by considering subfield subcodes we are able to obtain long codes with good parameters over a small finite field.

\section{A bound for the generalized Hamming weights of projective Reed-Muller codes}\label{secghw}
In this section we provide a lower bound for the generalized Hamming weights of any projective Reed-Muller code. Let $C\subset \fqs^n$ be a linear code and $D\subset C$. The support of $D$, denoted by $\Supp(D)$, is defined as
$$
\Supp(D):=\{i\mid \exists\ c=(c_1,\dots,c_n)\in D,\; c_i\neq 0 \}.
$$
If $D$ is a linear subspace contained in $C$, then we say that it is a subcode of $C$. The $r$th generalized Hamming weight of $C$, denoted by $d_r(C)$, is defined as
$$
d_r(C)=\min \{ \abs{\Supp(D)}\mid D \text{ is a subcode of $C$ with } \dim D=r\}.
$$

\begin{rem}\label{rembaseGHW}
Given a basis $B=\{b_1,\dots,b_k\}$ for a subcode $D$, we have that
$$
\Supp(D)=\bigcup_{i=1}^k \Supp(b_i).
$$
\end{rem}

The generalized Hamming weights satisfy a Singleton-type bound and they are monotonous, as it is shown in the following results from \cite{weiGHW}.

\begin{thm}[(Monotonicity)]\label{monotonia}
For an $[n,k]$ linear code $C$ with $k>0$ we have
$$
1\leq d_1(C)<d_2(C)<\cdots <d_k(C)\leq n.
$$
\end{thm}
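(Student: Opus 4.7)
The plan is to establish the chain in three pieces: the lower endpoint $d_1(C) \geq 1$, the upper endpoint $d_k(C) \leq n$, and the strict monotonicity $d_r(C) < d_{r+1}(C)$ for $1 \leq r < k$.

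The two endpoints are immediate. For the lower bound, any $1$-dimensional subcode $D$ is spanned by a nonzero codeword, and by Remark~\ref{rembaseGHW} its support is nonempty, so $d_1(C) \geq 1$. For the upper bound, $D = C$ is itself a $k$-dimensional subcode of $C$ with support contained in $\{1, \ldots, n\}$, giving $d_k(C) \leq n$.

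The heart of the argument is strict monotonicity. Fix $r$ with $1 \leq r < k$, and choose a subcode $D \subseteq C$ of dimension $r+1$ attaining the minimum, so $|\Supp(D)| = d_{r+1}(C)$. My idea is to produce an $r$-dimensional subcode with strictly smaller support by cutting down with a single coordinate. Pick any index $i \in \Supp(D)$ and define
$$
D' = \{c = (c_1, \ldots, c_n) \in D : c_i = 0\}.
$$
Since $i \in \Supp(D)$, the linear form $c \mapsto c_i$ is not identically zero on $D$, so $D'$ is a hyperplane in $D$ and has dimension exactly $r$. By construction no codeword of $D'$ has a nonzero entry at position $i$, hence $\Supp(D') \subseteq \Supp(D) \setminus \{i\}$, and therefore
$$
d_r(C) \leq |\Supp(D')| \leq |\Supp(D)| - 1 = d_{r+1}(C) - 1,
$$
which gives $d_r(C) < d_{r+1}(C)$.

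The only step that requires a little care is checking that $D'$ has dimension exactly $r$ rather than dropping further; this is guaranteed by the choice of $i$ from $\Supp(D)$, which ensures that the coordinate projection $D \to \fqs$, $c \mapsto c_i$, is surjective and hence has kernel of codimension one. Beyond this I do not anticipate any genuine obstacle, since the result is classical and the construction of $D'$ is essentially forced once one looks for a hyperplane of $D$ missing a chosen support coordinate.
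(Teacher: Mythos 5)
Your proof is correct. The paper does not prove this statement itself --- it is quoted from Wei's paper \cite{weiGHW} --- and your argument (taking a minimizing $(r+1)$-dimensional subcode $D$, choosing $i\in\Supp(D)$, and passing to the codimension-one subcode $D'=\{c\in D : c_i=0\}$, whose support misses $i$) is precisely the classical proof of monotonicity, with the endpoint cases handled correctly.
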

\begin{cor}[(Generalized Singleton Bound)]\label{singletongeneralizada}
For an $[n,k]$ linear code $C$ we have
$$
d_r(C)\leq n-k+r, \; 1\leq r\leq k.
$$
\end{cor}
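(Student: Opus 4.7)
The plan is to derive the bound as a direct consequence of the monotonicity property stated immediately above in Theorem \ref{monotonia}; no fresh ideas about supports or subcodes are needed. The key observation is that Theorem \ref{monotonia} already gives us a strictly increasing chain of positive integers
$$
d_r(C) < d_{r+1}(C) < \cdots < d_k(C) \leq n,
$$
and a generalized Singleton-type inequality ought to fall out of a simple counting argument on this chain.

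More concretely, I would argue as follows. From the chain above we read off $k-r$ strict inequalities between integers, namely $d_r(C) < d_{r+1}(C)$, $d_{r+1}(C) < d_{r+2}(C)$, and so on up to $d_{k-1}(C) < d_k(C)$. Each such strict inequality between integers forces an increment of at least $1$, so by a one-line induction we obtain
$$
d_r(C) \leq d_k(C) - (k-r).
$$
Combining this with the upper bound $d_k(C) \leq n$ from Theorem \ref{monotonia} gives $d_r(C) \leq n - (k-r) = n-k+r$, which is exactly the stated bound.

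I do not expect any real obstacle: the entire content of the corollary is already encoded in the strict monotonicity. The only points that deserve a brief sanity check are the boundary cases, namely $r=k$ (which reduces to $d_k(C) \leq n$, i.e.\ the right end of the chain) and $r=1$ (which recovers the usual Singleton bound $d_1(C) \leq n-k+1$). Both are immediate. An alternative, slightly more self-contained route would be to take a subcode $D\subset C$ of dimension $r$ achieving $d_r(C)$, extend it to a subcode of dimension $k$ (i.e.\ to $C$ itself) by adding $k-r$ independent codewords, and use Remark \ref{rembaseGHW} to bound the growth of the support by at most one new coordinate per added basis vector; but given that Theorem \ref{monotonia} is already on hand, the chain argument is the cleanest path.
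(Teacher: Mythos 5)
Your proof is correct and is exactly the intended derivation: the paper states this as a corollary of Theorem \ref{monotonia} (citing \cite{weiGHW}) without writing out the argument, and the standard deduction is precisely your chain-counting step $d_r(C)\leq d_k(C)-(k-r)\leq n-k+r$. Nothing further is needed.
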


For an MDS code $C$ with length $n$ and dimension $k$, Theorem \ref{monotonia} and Corollary \ref{singletongeneralizada} imply that 
$$
d_r(C)=n-k+r, \;1\leq r\leq k.
$$

For the affine case, the generalized Hamming weights of Reed-Muller codes were obtained in \cite{pellikaanGHWRM}, where the authors give several ways to compute them. We present one of them now, which does not require additional machinery, see \cite[Thm. 5.10]{pellikaanGHWRM}. Let $Q=\{0,\dots,q^s-1\}$. We consider the set $Q^m$ with the lexicographic order defined by 
$$
(\alpha_1,\dots,\alpha_m)\prec (\beta_1,\dots,\beta_m) \iff \alpha_1=\beta_1,\dots,\alpha_\ell=\beta_\ell, \;\alpha_{\ell+1}<\beta_{\ell+1}, \textnormal{ for some }\ell.
$$
For $\beta \in Q^m$,  we denote $\deg(\beta)=\sum_{i=1}^m\beta_i$.

\begin{thm}\label{ghwRM}
Let $\beta$ be the $r$th element of $Q^m$ in the previous lexicographic order with the property that 
$$
\deg(\beta)>m(q^s-1)-d-1.
$$
Then
$$
d_r(\RM_d(m))=\sum_{i=1}^m \beta_{m-i+1}q^{s(i-1)}+1.
$$
\end{thm}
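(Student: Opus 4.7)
The plan is to follow a footprint / Gr\"obner basis approach, in the spirit of Heijnen--Pellikaan, splitting the argument into a footprint lower bound on the support, a matching polynomial construction, and a combinatorial minimization. Every subcode $D\subset\RM_d(m)$ of dimension $r$ arises as $\ev_{\mathbb{A}}(U)$ for a unique $U\subset R_{\leq d}$ of dimension $r$ supported on the monomial basis $\{x^\alpha:\alpha\in Q^m,\,\deg\alpha\leq d\}$, and $|\Supp(D)|=q^{sm}-|V(D)|$ with $V(D)\subset\fqs^m$ the common zero set. Picking a basis of $U$ with distinct leading monomials $A=\{\alpha^{(1)},\ldots,\alpha^{(r)}\}$ for a fixed monomial order, and using that $V(D)=V(U+J)$ for $J=(x_1^{q^s}-x_1,\ldots,x_m^{q^s}-x_m)$, the footprint bound yields
\[
|\Supp(D)|\geq|\Delta(A)|,\qquad\Delta(A):=\{\gamma\in Q^m:\gamma\geq\alpha\text{ for some }\alpha\in A\}.
\]

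For sharpness, fix an enumeration $\fqs=\{b_0,\ldots,b_{q^s-1}\}$ and set $f_\alpha:=\prod_{i=1}^m\prod_{j=0}^{\alpha_i-1}(x_i-b_j)\in R_{\leq d}$. Under the bijection $\fqs^m\simeq Q^m$ given by $x_i=b_{\gamma_i}$, the support of $\ev_{\mathbb{A}}(f_\alpha)$ corresponds precisely to $\{\gamma\in Q^m:\gamma\geq\alpha\}$. Since the $f_\alpha$ for $\alpha\in A$ have distinct leading monomials, their evaluations span a subcode of dimension $r$ with support of size $|\Delta(A)|$; combining with the lower bound above gives
\[
d_r(\RM_d(m))=\min\bigl\{|\Delta(A)|:A\subset\{\alpha\in Q^m:\deg\alpha\leq d\},\,|A|=r\bigr\}.
\]

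The main obstacle is the resulting combinatorial minimization. The involution $\alpha\leftrightarrow\beta:=(q^s-1,\ldots,q^s-1)-\alpha$ reverses both componentwise and lex orders and exchanges $\{\deg\alpha\leq d\}$ with $S_d:=\{\beta\in Q^m:\deg\beta\geq m(q^s-1)-d\}$, converting the problem to minimizing $|\nabla(B)|$ over $r$-subsets $B\subset S_d$, where $\nabla(B):=\{\delta\in Q^m:\delta\leq\beta\text{ for some }\beta\in B\}$. Taking $B=\{\beta^{(1)}\prec\cdots\prec\beta^{(r)}=\beta\}$ as the first $r$ elements of $S_d$ in lex order, the key identity is
\[
\nabla(B)=\{\delta\in Q^m:\delta\preceq\beta\},
\]
whose cardinality equals the 1-indexed lex rank of $\beta$ in $Q^m$, namely $\sum_{i=1}^m\beta_{m-i+1}q^{s(i-1)}+1$. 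The nontrivial inclusion ``$\supseteq$'' is established by saturation: if $\delta\preceq\beta$ with $\delta\not\leq\beta$, letting $k$ be the first coordinate where $\delta_k<\beta_k$, the hypothesis forces some $\beta_i<q^s-1$ with $i>k$, which provides the degree slack needed to show that $\beta^{*}:=(\beta_1,\ldots,\beta_{k-1},\beta_k-1,q^s-1,\ldots,q^s-1)$ lies in $S_d$ (hence in $B$) and satisfies $\beta^*\geq\delta$. A concluding exchange argument on downsets confirms that this lex-first $B$ globally minimizes $|\nabla(B)|$ over all $r$-subsets of $S_d$, completing the proof.
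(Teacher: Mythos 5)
This theorem is not proved in the paper at all: it is quoted verbatim from Heijnen--Pellikaan \cite{pellikaanGHWRM} (their Theorem 5.10), so there is no internal proof to compare against. Your outline follows the same footprint/Gr\"obner strategy as that source, and most of it is sound: the reduction of $d_r(\RM_d(m))$ to $\min_A\abs{\Delta(A)}$ is correct (the footprint bound gives the lower estimate, and your polynomials $f_\alpha$, being reduced with distinct leading monomials, realize it), the complementation to $\min_B\abs{\nabla(B)}$ over $r$-subsets of $S_d$ is correct, and your saturation argument identifying $\nabla(B_{\mathrm{lex}})$ with the lex initial segment $\{\delta\preceq\beta\}$ is complete and correct: in the case $\delta\preceq\beta$ but $\delta\not\leq\beta$, some coordinate $j>k$ has $\delta_j>\beta_j$, hence $\beta_j<q^s-1$, which supplies exactly the one unit of degree needed to place $\beta^*$ in $S_d$. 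The count $\sum_{i=1}^m\beta_{m-i+1}q^{s(i-1)}+1$ of the lex segment also checks out against the paper's ordering convention.

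The genuine gap is the last sentence. The assertion that the lex-first $r$-subset of $S_d$ globally minimizes $\abs{\nabla(B)}$ is the entire combinatorial content of the theorem, and ``a concluding exchange argument on downsets'' is not a proof of it. This is a shadow-minimization statement in a product of chains, i.e.\ a Clements--Lindstr\"om/Macaulay-type result: one must show that iterated compressions do not increase $\abs{\nabla(B)}$ while keeping $B$ inside $S_d$, and that fully compressed sets can be compared to the lex segment. Naive element-swapping fails for such problems (this is precisely why Kruskal--Katona and Clements--Lindstr\"om require delicate induction on the number of variables combined with compression operators), and in Heijnen--Pellikaan this minimization occupies the bulk of their Sections 3--4. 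Until that step is supplied, your argument only establishes the upper bound $d_r(\RM_d(m))\leq\sum_{i=1}^m\beta_{m-i+1}q^{s(i-1)}+1$ together with the general lower bound $d_r(\RM_d(m))\geq\min_B\abs{\nabla(B)}$, not their equality.
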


Our goal in this section is to provide a general lower bound for the generalized Hamming weights of any projective Reed-Muller code using the construction from Theorem \ref{recursive}. We do this in the next result, where we understand that $\PRM_d=\{0\}$ for $d\leq 0$, and, as before, $d_0(C)=0$ for a code $C\subset \fqs^n$. 

\begin{thm}\label{cotaghw}
Let $1\leq d\leq m(q^s-1)$ and $2\leq r\leq \dim(\PRM_d(m))$. We consider
$$
    Y = \left\{(\alpha,\gamma): \begin{array}{c}
        \max\{r-\dim \RM_{d-1}(m),0\} \leq  \alpha \leq \min \{ \dim \PRM_d(m-1),r\}\\
        \max\{r-\dim\RM_d(m),0\}\leq \gamma \leq  \min\{\dim \PRM_{d-(q^s-1)}(m-1),\alpha\} \\
        \end{array}\right\}.
$$
Then we have
$$
d_r(\PRM_d(m))\geq \min_{(\alpha,\gamma)\in Y}B_{\alpha,\gamma},
$$
where 
$$
\begin{aligned}
B_{\alpha,\gamma}:=& \max(d_{r-\gamma}(\RM_d(m)),d_{r-\alpha}(\RM_{d-1}(m))) \\
&+\max(d_{\alpha}(\PRM_d(m-1)),d_\gamma(\PRM_{d-(q^s-1)}(m-1))).
\end{aligned}
$$
\end{thm}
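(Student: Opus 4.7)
The plan is to exploit Theorem \ref{recursive} to split an arbitrary subcode $D\subset \PRM_d(m)$ of dimension $r$ into its left and right halves. Since, by Corollary \ref{corodimRM}, the map $(u,v)\mapsto (u+v_{\xi,d},v)$ is a linear bijection $\RM_{d-1}(m)\oplus \PRM_d(m-1)\to \PRM_d(m)$, every codeword of $D$ has a unique decomposition, and one obtains two well-defined linear maps on $D$: a right projection $\pi_R:D\to \PRM_d(m-1)$ sending $(u+v_{\xi,d},v)\mapsto v$, and a left projection $\pi_L:D\to \RM_d(m)$ sending $(u+v_{\xi,d},v)\mapsto u+v_{\xi,d}$; the image of $\pi_L$ lies in $\RM_d(m)$ because $u\in \RM_{d-1}(m)\subset \RM_d(m)$ and $v_{\xi,d}$ is the affine evaluation at $\fqs^m$ of the homogeneous polynomial of degree $d$ defining $v$. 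Since the left and right blocks of coordinates are disjoint, $\abs{\Supp(D)}=\abs{\Supp(\pi_L(D))}+\abs{\Supp(\pi_R(D))}$.

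Next I would set $\alpha=\dim \pi_R(D)$ and $\gamma=\dim \ker \pi_L$ and verify that $(\alpha,\gamma)\in Y$. The restriction of $\pi_L$ to $\ker \pi_R$ sends $(u,0)\mapsto u$ and is injective with image inside $\RM_{d-1}(m)$, giving $r-\alpha\leq \dim \RM_{d-1}(m)$; symmetrically, $\pi_R$ is injective on $\ker \pi_L$ (if $v=0$ and $u+v_{\xi,d}=0$ then $u=0$), whence $\gamma\leq \alpha$. The identity $r-\gamma=\dim \pi_L(D)\leq \dim \RM_d(m)$ controls $\gamma$ from below, and the trivial inclusion $\pi_R(D)\subset \PRM_d(m-1)$ bounds $\alpha$ from above by $\min\{\dim \PRM_d(m-1),r\}$. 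The remaining upper bound $\gamma\leq \dim \PRM_{d-(q^s-1)}(m-1)$ requires the main technical step below.

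The main obstacle is showing that $\pi_R(\ker \pi_L)\subset \PRM_{d-(q^s-1)}(m-1)$. Take $(0,v)\in \ker \pi_L$ with $v=\ev_{P^{m-1}}(f)$ for a homogeneous $f\in \fqs[x_1,\ldots,x_m]$ of degree $d$; then $u=-v_{\xi,d}\in \RM_{d-1}(m)$, i.e., the affine evaluation of $f$ at $\fqs^m$ lies in $\RM_{d-1}(m)$, so the reduction of $f$ modulo $(x_1^{q^s}-x_1,\ldots,x_m^{q^s}-x_m)$ has degree at most $d-1$. A monomial $x^\alpha$ of degree $d$ with all $\alpha_i\leq q^s-1$ is its own reduction and keeps degree $d$, while if some $\alpha_i\geq q^s$ then reducing $x_i^{q^s}$ to $x_i$ lowers the total degree by $q^s-1$, so the reduction has degree at most $d-(q^s-1)$. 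Thus the degree-$d$ part of the reduction of $f$ coincides with the sum of the monomials in $f$ whose exponents are all $\leq q^s-1$, and for this to vanish those monomials must have coefficient zero in $f$. Consequently $f=\sum_{i=1}^m x_i^{q^s}g_i$ for some homogeneous $g_i$ of degree $d-q^s$, and for any $a\in P^{m-1}\subset \fqs^m$ the identity $a_i^{q^s}=a_i$ yields $f(a)=\sum_i a_ig_i(a)=h(a)$ with $h=\sum_i x_ig_i$ homogeneous of degree $d-(q^s-1)$, so $v\in \PRM_{d-(q^s-1)}(m-1)$.

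With the structural results in hand, assembling the bound is routine. The subspace $\pi_L(D)\subset \RM_d(m)$ has dimension $r-\gamma$ and the subspace $\pi_L(\ker \pi_R)\subset \RM_{d-1}(m)$ has dimension $r-\alpha$, and both contribute to $\Supp(\pi_L(D))$, giving $\abs{\Supp(\pi_L(D))}\geq \max(d_{r-\gamma}(\RM_d(m)),d_{r-\alpha}(\RM_{d-1}(m)))$. Symmetrically, $\pi_R(D)\subset \PRM_d(m-1)$ has dimension $\alpha$ and $\pi_R(\ker \pi_L)\subset \PRM_{d-(q^s-1)}(m-1)$ has dimension $\gamma$, so $\abs{\Supp(\pi_R(D))}\geq \max(d_\alpha(\PRM_d(m-1)),d_\gamma(\PRM_{d-(q^s-1)}(m-1)))$. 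Adding the two contributions yields $\abs{\Supp(D)}\geq B_{\alpha,\gamma}$, and minimizing over all admissible pairs $(\alpha,\gamma)\in Y$ gives the stated lower bound on $d_r(\PRM_d(m))$.
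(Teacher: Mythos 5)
Your proof is correct and follows essentially the same route as the paper's: the paper realizes your projections $\pi_L,\pi_R$ and the invariants $\alpha=\dim\pi_R(D)$, $\gamma=\dim\ker\pi_L$ by row-reducing a generator matrix of $D$ into the block form $\begin{pmatrix}G_1 & G_2\end{pmatrix}$, and establishes the same key inclusion $\pi_R(\ker\pi_L)\subset\PRM_{d-(q^s-1)}(m-1)$ by the same reduction modulo the field equations. Your last step there, writing $f=\sum_i x_i^{q^s}g_i$ and replacing it by the homogeneous polynomial $h=\sum_i x_ig_i$ of degree $d-(q^s-1)$, is a slightly more explicit way of exhibiting the required homogeneous representative than the paper's appeal to exponent reduction, but it is the same idea.
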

\begin{proof}
We fix a degree $d$, and let $\xi\in \fqs$ be a primitive element. Let $k_u=\dim(\RM_{d-1}(m))$ and $k_v=\dim(\PRM_d(m-1))$. We consider a basis $\{u^i\}_{i=1}^{k_u}$ for $\RM_{d-1}(m)$ and a basis $\{v^j\}_{j=1}^{k_v}$ for $\PRM_d(m-1)$. Then, by Theorem \ref{recursive}, we have that $\mathcal{B}=\{u^i\times \{0\}^{p_{m-1}}\}_{i=1}^{k_u}\cup \{(v^j_{\xi,d},v^j)\}_{j=1}^{k_v}$ is a basis for $\PRM_d(m)$. Let $D$ be a subcode of $\PRM_d(m)$ with $\dim D=r$. We consider a basis
$$
\mathcal{B}_D=\{b_l,1\leq l\leq r\}:=\left\{\sum_{i=1}^{k_u} \lambda_{l,i} u^i \times \{0\}^{p_{m-1}} +\sum_{j=1}^{k_v}\mu_{l,j} (v^j_{\xi,d},v^j) , 1\leq l \leq r\right\}.
$$
Now we divide the vectors $b_l$ into two parts, $b_l=(b_{l,1},b_{l,2})$, where 
\begin{equation}\label{defbl}
b_{l,1}:=\sum_{i=1}^{k_u}\lambda_{l,i} u^i+\sum_{j=1}^{k_v}\mu_{l,j}v^j_{\xi,d} \;,\; b_{l,2}:=\sum_{j=1}^{k_v}\mu_{l,j}v^j. 
\end{equation}

By the definition, it is clear that $b_{l,2}\in \PRM_d(m-1)$. On the other hand, if $b_{l,2}=0$, then $\mu_{l,j}=0$ for all $j$ and $b_{l,1}\in \RM_{d-1}(m)$. Moreover, in general we have $b_{l,1}\in \RM_d(m)$. This is because $b_{l,1}$ is the first part of $b_l$, which is a vector from $\PRM_d(m)$, i.e., is the evaluation of a homogeneous polynomial $f$ of degree $d$. The evaluation of $f$ in the first part of the decomposition (\ref{descomposicion}) is the same as the evaluation of $g:=f(1,x_1,\dots,x_m)$ in $\mathbb{A}^m$, which belongs to $\RM_d(m)$. 

If we consider the matrix $G$ whose rows are given by the vectors in $\BD$ we obtain
$$
G=\begin{pmatrix}
b_{1,1} & b_{1,2}\\
b_{2,1} & b_{2,2}\\
\vdots & \vdots \\
b_{r,1} & b_{r,2}\\
\end{pmatrix}
.
$$
By performing row operations in $G$, and reordering the $b_{l,i}$ if necessary, we can assume that there is an integer $\alpha\leq r$ such that the set $\{ b_{1,2},\dots,b_{\alpha,2}\}$ is linearly independent, and $b_{\alpha+1,2}=\cdots=b_{r,2}=0$. Therefore, we have
$$
G=\begin{pmatrix}
b_{1,1} & b_{1,2}\\
\vdots & \vdots \\
b_{\alpha,1}& b_{\alpha,2}\\
b_{\alpha+1,1}& 0 \\
\vdots & \vdots \\
b_{r,1} & 0\\
\end{pmatrix}.
$$
With the current ordering, note that the set $\{b_{\alpha+1,1},\dots,b_{r,1}\}$ is linearly independent (because $\mathcal{B}_D$ is a basis and $G$ is a full rank matrix). Therefore, we can perform row operations in such a way that, after reordering the $b_{l,i}$ (if necessary), for $1\leq l\leq \alpha$, we have an integer $\gamma\leq \alpha$ such that $b_{1,1}=\cdots=b_{\gamma,1}=0$, and the set $\{b_{\gamma+1,1},\dots,b_{r,1}\}$ is linearly independent. Therefore, we can assume that $G$ has the form 
\begin{equation}\label{formaG}
G=\begin{pmatrix}
0 & b_{1,2}\\
\vdots & \vdots \\
0 & b_{\gamma,2} \\
b_{\gamma+1,1} & b_{\gamma+1,2}\\
\vdots & \vdots \\
b_{\alpha,1}& b_{\alpha,2}\\
b_{\alpha+1,1}& 0 \\
\vdots & \vdots \\
b_{r,1} & 0\\
\end{pmatrix}=:\begin{pmatrix}G_1 &G_2
\end{pmatrix},
\end{equation}
where $\{b_{1,2},\dots,b_{\alpha,2}\}$ and $\{b_{\gamma+1,1},\dots,b_{r,1}\}$ are linearly independent sets. Now we will give a lower bound for $\abs{\Supp(D)}$ for any subcode $D$ of $\PRM_d(m)$ depending on the values of $\alpha$ and $\gamma$. Note that these values do not depend on the choice of $\B_D$. 

Assuming that $G$ has the form from (\ref{formaG}), by the reasoning after equation (\ref{defbl}), we see that $\{b_{\alpha+1,1},\dots,b_{r,1}\}$ is contained in $\RM_{d-1}(m)$ and $\{b_{1,2},\dots,b_{\alpha,2}\}$ is contained in $\PRM_d(m-1)$. Both of these sets are linearly independent because of the assumptions on the form of $G$. Therefore, $r- \dim \RM_{d-1}(m)\leq \alpha \leq \dim \PRM_d(m-1)$ (we also have the obvious condition $\alpha\leq r$). 

In order to bound $\abs{\Supp(D)}$, we note that $\Supp(D)$ is the union of the supports of $b_l$, $1\leq l \leq r$, by Remark \ref{rembaseGHW}. Therefore, it is enough to study the union of the supports of the rows of $G$. Moreover, we can study the union of the support of the $b_{i,1}$ and $b_{j,2}$, $1\leq i,j \leq r$, separately, which corresponds to studying the union of the supports of the rows of $G_1$ and $G_2$, which we denote by $\Supp(G_1)$ and $\Supp(G_2)$, respectively. 

For $G_1$, by considering the last $r-\alpha$ rows it is clear that $\abs{\Supp(G_1)}\geq d_{r-\alpha}(\RM_{d-1}(m))$ (recall that $b_{l,1}\in \RM_{d-1}(m)$ if $b_{l,2}=0$). Another possible bound is $\abs{\Supp(G_1)}\geq d_{r-\gamma}(\RM_{d}(m))$ (see the reasoning after equation \ref{defbl}). Therefore,
$$
\abs{\Supp(G_1)}\geq \max({d_{r-\alpha}(\RM_{d-1}(m)),d_{r-\gamma}(\RM_{d}(m))}).
$$
For $G_2$, since $b_{l,2}\in \PRM_d(m-1)$, we have $\abs{\Supp(G_2)}\geq d_\alpha(\PRM_d(m-1))$. This bound can be improved by studying the first $\gamma$ rows of $G$. This is because in order to have $b_{l,1}=0$, by Theorem \ref{recursive} we also must have $b_{l,1}=u+v_{\xi,d}=0$ for some $u\in \RM_{d-1}(m)$, $v\in \PRM_d(m-1)$. The vector $u$ is the evaluation of a polynomial $f$ of degree at most $d-1$ in $\mathbb{A}^m$, and $v_{\xi,d}$ is the evaluation of a homogeneous polynomial $g$ of degree $d$ in $\mathbb{A}^m$.
As $f$ and $g$ have the opposite evaluation in $\mathbb{A}^m$, we must have $f\equiv -g \bmod \langle x_1^{q^s}-x_1,\dots,x_{m}^{q^s}-x_{m}\rangle$. This implies that, if we consider $\overline{f}$ and $\overline{g}$ the polynomials obtained by reducing all the exponents of the monomials of $f$ and $g$, respectively, modulo $q^s-1$, then $\overline{f}=-\overline{g}$. As $f$ is of degree at most $d-1$, $\overline{f}$ and $\overline{g}$ are of degree at most $d-1$. Taking into account that $g$ is homogeneous of degree $d$, we deduce that all the exponents of the monomials of $g$ can be reduced modulo $q^s-1$ (in order to have $\overline{g}$ of degree at most $d-1$), which implies that all the monomials from $g$ reduce to monomials of degree at most $d-(q^s-1)$ in $\overline{g}$. Hence, $g$ has the same evaluation as some homogeneous polynomial of degree $d-(q^s-1)$. Thus, $v\in \PRM_{d-(q^s-1)}(m-1)$. What we have obtained is that $b_{l,2}\in\PRM_{d-(q^s-1)}(m-1)$, $1\leq l\leq \gamma$, and 
$$
\abs{\Supp(G_2)}\geq \max(d_{\alpha}(\PRM_d(m-1)),d_\gamma(\PRM_{d-(q^s-1)}(m-1))).
$$
Therefore, for $D$ we have
\begin{equation}\label{defcota}
\begin{aligned}
\abs{\Supp(D)} \geq B_{\alpha,\gamma}=& \max(d_{r-\gamma}(\RM_d(m)),d_{r-\alpha}(\RM_{d-1}(m))) \\
&+\max(d_{\alpha}(\PRM_d(m-1)),d_\gamma(\PRM_{d-(q^s-1)}(m-1))).
\end{aligned}
\end{equation}
Note that, because of the previous reasoning and the form of $G$, we must have $r-\dim \RM_d(m)\leq \gamma \leq \dim \PRM_{d-(q^s-1)}(m-1)$ (besides $\gamma\leq \alpha$). Since any subcode $D$ gives some $(\alpha,\gamma)\in Y$ by the previous reasoning, and taking (\ref{defcota}) into account, we obtain the result.
\end{proof}

\begin{rem}\label{remdmenorq}
Note that we can only have $\gamma>0$ in Theorem \ref{cotaghw} if $d\geq q^s$.
\end{rem}

With the previous result we obtain lower bounds for all the generalized Hamming weights of projective Reed-Muller codes recursively. This is because we can apply the previous theorem recursively until we get to projective Reed-Solomon codes, in which case we have $d_r(\PRM_d(1))=\max\{q^s-d+r,r\}$, for $1\leq r\leq d+1$. With this starting point, and using Theorem \ref{ghwRM} for the generalized Hamming weights of affine Reed-Muller, we can compute the value of the previous bound for the generalized Hamming weight of any projective Reed-Muller code. This contrasts with the results from \cite{beelenGHWPRM}, where the authors compute the generalized Hamming weights of projective Reed-Muller codes in some cases and they also provide lower bounds, but only for $d< q^s$, while our bound works for any degree and any $r$. 

We have checked by computer that the bound in Theorem \ref{cotaghw} is sharp for all the generalized Hamming weights with $m=2,3,$ and $q^s=2$; $m=2$ and $q^s=3$; and also for some particular degrees, such that $m=3$, $q^s=3$ and $d=1$; and $m=2$, $q^s=4$ and $d=1,2$. As computing the generalized Hamming weights of a code is computationally intensive, we can only do it for smaller examples. It is therefore desirable to have some criterion to guarantee that the bound from Theorem \ref{cotaghw} is sharp in some cases, which is the purpose of the following result. 

\begin{lem}\label{lemasup}
Let $1\leq d\leq m(q^s-1)$ and $2\leq r\leq \max\{\dim \RM_{d-1}(m),\dim \PRM_d(m-1) \}$. Then
$$
d_r(\PRM_d(m))\leq \min\{ d_r(\RM_{d-1}(m)),q^s\cdot d_r(\PRM_d(m-1))\},
$$
where if $r> \dim\RM_{d-1}(m)$ or $r> \dim \PRM_d(m-1)$ we do not consider the corresponding generalized Hamming weight in the minimum.
\end{lem}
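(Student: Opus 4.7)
The plan is to construct, for each of the two quantities on the right-hand side, an explicit subcode of $\PRM_d(m)$ of dimension $r$ whose support has the desired size, directly exploiting the recursive construction from Theorem \ref{recursive}. Since $d_r(\PRM_d(m))$ is a minimum over all dimension-$r$ subcodes, exhibiting two such subcodes will immediately give the minimum of the two bounds.

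First I would handle the bound $d_r(\PRM_d(m))\leq d_r(\RM_{d-1}(m))$ (valid when $r\leq \dim \RM_{d-1}(m)$). By definition, pick a subcode $D'\subset \RM_{d-1}(m)$ of dimension $r$ with $|\Supp(D')|=d_r(\RM_{d-1}(m))$. By Theorem \ref{recursive}, taking $v=0$ shows that $D'\times \{0\}^{p_{m-1}}$ is an $r$-dimensional subcode of $\PRM_d(m)$ whose support is precisely $\Supp(D')$, so $d_r(\PRM_d(m))\leq d_r(\RM_{d-1}(m))$.

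Next I would handle the bound $d_r(\PRM_d(m))\leq q^s\cdot d_r(\PRM_d(m-1))$ (valid when $r\leq \dim \PRM_d(m-1)$). Pick a subcode $D''\subset \PRM_d(m-1)$ of dimension $r$ with $|\Supp(D'')|=d_r(\PRM_d(m-1))$, and a basis $v^1,\dots,v^r$ of $D''$. By Theorem \ref{recursive}, taking $u=0$, each $(v^j_{\xi,d},v^j)$ lies in $\PRM_d(m)$, and the map $v\mapsto (v_{\xi,d},v)$ is injective (the last $p_{m-1}$ coordinates recover $v$), so $\tilde D:=\langle (v^j_{\xi,d},v^j):1\leq j\leq r\rangle$ is an $r$-dimensional subcode of $\PRM_d(m)$. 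By Remark \ref{rembaseGHW} its support is the union of the supports of the basis vectors. Using the decomposition (\ref{descomposicion3}), the first $q^{sm}$ coordinates split into $q^s-1$ blocks of length $p_{m-1}$ (for $P^{m-1},\xi P^{m-1},\dots,\xi^{q^s-2}P^{m-1}$) plus one extra coordinate for the origin; and since $\Supp(\xi^{id}v^j)=\Supp(v^j)$, the support of $\tilde D$ restricted to each of these $q^s-1$ blocks is a translated copy of $\Supp(D'')$, while the restriction to the last $p_{m-1}$ coordinates (the $\{0\}\times P^{m-1}$ part) is again a copy of $\Supp(D'')$. These $q^s$ copies are pairwise disjoint, so $|\Supp(\tilde D)|=q^s|\Supp(D'')|=q^s\cdot d_r(\PRM_d(m-1))$, proving the second inequality.

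The only genuine subtlety is the combinatorial bookkeeping in the second step: one has to check that the support picks up exactly $q^s$ disjoint copies of $\Supp(D'')$ (the $q^s-1$ coming from $v_{\xi,d}$ and one more from $v$ itself), not $q^s-1$ and not $q^s+1$. This is immediate from (\ref{descomposicion3}) and the fact that multiplication by $\xi^{id}$ preserves supports, so no real obstacle arises; the hypothesis $r\leq \max\{\dim \RM_{d-1}(m),\dim \PRM_d(m-1)\}$ is exactly what guarantees that at least one of the two constructions is available, which explains the convention about omitting terms from the minimum.
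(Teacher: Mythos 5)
Your proof is correct and follows essentially the same route as the paper: both bounds are obtained by exhibiting explicit $r$-dimensional subcodes of $\PRM_d(m)$ via the two extreme cases ($v=0$ and $u=0$) of the recursive construction in Theorem \ref{recursive}. Your extra bookkeeping for the second case (the $q^s-1$ blocks from $v_{\xi,d}$ plus the final copy of $v$, each contributing a disjoint copy of the support) is exactly what the paper's shorter argument leaves implicit, and it is accurate.
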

\begin{proof}
We are going to find subcodes $E'$ of $\PRM_d(m)$ such that $\abs{\Supp(E')}=d_r(\RM_{d-1}(m))$ or $\abs{\Supp(E')}=q^s\cdot d_r(\PRM_d(m-1))$. Assuming $r\leq \dim\RM_{d-1}(m)$, there is a subcode $E\subset \RM_{d-1}(m)$ with $\dim E=r$ and $\abs{\Supp(E)}=d_r(\RM_{d-1}(m))$. Then, the subcode $E'=E\times \{0\}^{p_{m-1}}\subset \PRM_d(m)$ verifies $\abs{\Supp(E')}=d_r(\RM_{d-1}(m))$.

If we assume that $r \leq \dim \PRM_d(m-1)$ instead, there is a subcode $E\subset \PRM_d(m-1)$ with $\dim E=r$, such that $\abs{\Supp(E)}=d_r(\PRM_d(m-1))$. The subcode 
$$
E':=\{(v_{\xi,d},v), v\in E \}\subset \PRM_d(m)
$$
verifies $\abs{\Supp(E')}=q^s\cdot d_r(\PRM_d(m-1))$.
\end{proof}

We can use Lemma \ref{lemasup}, together with Theorem \ref{monotonia}, to ensure that the bound from Theorem \ref{cotaghw} is sharp in many cases. We see this in Example \ref{exnormal} and in Subsection \ref{secex}, where we show the tables that we obtain using our results for the generalized Hamming weights of projective Reed-Muller codes for several finite fields. Notwithstanding the foregoing, as we will see in Example \ref{exnosharp}, we can also find particular cases in which the bound is not sharp.

\begin{ex}\label{exnormal}
Let $q^s=4$, $d=5$, $m=2$ and $r=2$. In this example we compute the bound from Theorem \ref{cotaghw} for $d_2(\PRM_5(2))$. From Theorem \ref{paramRM} we can obtain $\dim \RM_4(2)=13$ and $\dim \RM_5(2)=15$. For $m=2$, $\PRM_d(m-1)$ is a projective Reed-Solomon code, and for $d=5>4=q^s$ we have $\PRM_5(1)=\fqs^{q^s+1}$, which implies $\dim \PRM_5(1)=q^s+1$. Moreover, for $d-(q^s-1)=2$, we have $\dim \PRM_2(1)=3$. Therefore, we obtain
$$
    Y = \left\{(\alpha,\gamma): \begin{array}{c}
        0 \leq  \alpha \leq 2\\
        0\leq \gamma \leq  \alpha \\
        \end{array}\right\}=\{(0,0),(1,0),(1,1),(2,0),(2,1),(2,2)\}.
$$
Now we have to compute $\min_{(\alpha,\gamma)\in Y}B_{\alpha,\gamma}$. To do it, we need to obtain the generalized Hamming weights for affine Reed-Muller codes and projective Reed-Solomon codes. For projective Reed-Solomon codes, we have already stated that we have $d_r(\PRM_d(1))=\max\{q^s-d+r,r\}$, and for affine Reed-Muller codes we can use Theorem \ref{ghwRM} to obtain
\begin{table}[ht]
\centering
\begin{tabular}{||c||c|c||}
\hline
    $r$ & 1&2 \\
    \hline \hline
    $d_r(\RM_5(2))$ & 2&3\\
    $d_r(\RM_4(2))$ & 3&4\\
    \hline
\end{tabular}
\end{table}

With all of this we can compute all the values $B_{\alpha,\gamma}$, for $(\alpha,\gamma)\in Y$:
$$
\begin{aligned}
B_{0,0} &=d_2(\RM_4(2))=4, \\
B_{1,0} &=\max \{ d_2(\RM_5(2)),d_1(\RM_4(2))\}+d_1(\PRM_5(1))=\max \{ 3,3\}+1=4,\\
B_{1,1} &=\max\{d_1(\RM_5(2)),d_1(\RM_4(2))\}+\max\{d_1(\PRM_5(1)),d_1(\PRM_2(1))\}=6,\\
B_{2,0} &=d_2(\RM_5(2))+d_2(\PRM_5(1))=3+2=5,\\
B_{2,1} &=d_1(\RM_5(2))+\max\{d_2(\PRM_5(1)),d_1(\PRM_2(1))\}=2+\max\{2,3\}=5,\\
B_{2,2} &=\max\{d_2(\PRM_5(1)),d_2(\PRM_2(1))\}=\max\{2,4\}=4.\\
\end{aligned}
$$
Therefore, we have obtained
$$
d_r(\PRM_d(m))\geq \min_{(\alpha,\gamma)\in Y}B_{\alpha,\gamma}=4.
$$
Moreover, as $\min_{(\alpha,\gamma)\in Y}B_{\alpha,\gamma}=B_{0,0}=d_r(\RM_{d-1}(2))$, by Lemma \ref{lemasup} we know that we have the equality $d_r(\PRM_d(m))=4$. 
\end{ex}

\begin{ex}\label{exnosharp}
In this example we show a particular case in which the bound from Theorem \ref{cotaghw} is not sharp. Let $q^s=4$, $d=3$, $m=2$ and $r=2$. Using Theorem \ref{paramRM}, we obtain $\dim \RM_{2}(2)=6$ and $\dim \RM_3(2)=10$. For $\PRM_d(1)$, we have that this code is a projective Reed-Solomon code, and therefore its dimension is $d+1=4$ in this case. Thus,
$$
    Y = \left\{(\alpha,\gamma): \begin{array}{c}
        0 \leq  \alpha \leq 2\\
        0\leq \gamma \leq  0 \\
        \end{array}\right\}=\{(0,0),(1,0),(2,0)\}.
$$
Note that $\gamma=0$ because $d<q^s$, see Remark \ref{remdmenorq}. With respect to the generalized Hamming weights of affine Reed-Muller codes, we can use Theorem \ref{ghwRM} to obtain:
\begin{table}[ht]
\centering
\begin{tabular}{||c||c|c||}
\hline
    $r$ & 1&2 \\
    \hline \hline
    $d_r(\RM_3(2))$ & 4&7\\
    $d_r(\RM_2(2))$ & 8&11\\
    \hline
\end{tabular}
\end{table}

Now we have all the values required to compute the bound from Theorem \ref{cotaghw}:
$$
\begin{aligned}
B_{0,0} &=d_2(\RM_2(2))=11, \\
B_{1,0} &=\max \{ d_2(\RM_3(2),d_1(\RM_2(2))\}+d_1(\PRM_1(1))=\max \{ 7,8\}+2=10,\\
B_{2,0} &=d_2(\RM_3(2)+d_2(\PRM_1(1))=7+3=10,
\end{aligned}
$$
where we have used that $d_r(\PRM_d(1))=q^s-d+r$ for $d\leq q^s$. Therefore, 
$$
d_2(\PRM_3(2))\geq \min_{(\alpha,\gamma)\in Y}B_{\alpha,\gamma}=\min \{11,10,10\}=10.
$$
However, according to \cite[Ex 7.5]{beelenGHWPRM}, the true value is $d_2(\PRM_3(2))=11$ in this case. 
\end{ex}

\begin{ex}
We mainly use Lemma \ref{lemasup} with the bound $d_r(\PRM_d(m))\leq d_r(\RM_{d-1}(m))=B_{0,0}$, but the part $d_r(\PRM_d(m))\leq q^s\cdot d_r(\PRM_d(m-1))$ is also useful in some cases. For example, for $q^s=3$, $d=1$, $r=2$, $m=2$, one can check that $\min_{(\alpha,\gamma)\in Y}B_{\alpha,\gamma}=12$. We cannot use the bound $d_r(\PRM_d(m))\leq d_r(\RM_{d-1}(m))$ because $\dim \RM_0(2)=1<2=r$. However, we have $r=2= \dim \PRM_1(1)$, and $12=q^s\cdot d_2(\PRM_1(1))$. 

In the previous case, the bound was useful because we simply could not use $d_r(\PRM_d(m))\leq d_r(\RM_{d-1}(m))$. Nevertheless, there are also cases in which the bound $d_r(\PRM_d(m))\leq q^s\cdot d_r(\PRM_d(m-1))$ is better than the bound $d_r(\PRM_d(m))\leq d_r(\RM_{d-1}(m))$. For instance, for $q^s=3$, $d=3$, $r=2$, $m=3$, we have $d_2(\RM_{2}(3))=15>12=3\cdot d_2(\PRM_3(2))$. In fact, one can also check in this case that $\min_{(\alpha,\gamma)\in Y}B_{\alpha,\gamma}=12$ and we can state that the bound from Theorem \ref{cotaghw} is sharp due to Lemma \ref{lemasup}.
\end{ex}

\subsection{A bound for the generalized Hamming weights of projective Reed-Muller codes over $\mathbb{P}^2$}
Even though the bound from Theorem \ref{cotaghw} is not hard to obtain by computer in general, it can be obtained in more efficient ways in some particular cases. In this subsection, we obtain the bound from Theorem \ref{cotaghw} in a more explicit way and requiring less values of $B_{\alpha,\gamma}$ in order to compute the minimum over $Y$, for the case $m=2$.

\begin{thm}\label{cotam2}
Let $1\leq d\leq 2(q^s-1)$, $2\leq r\leq \dim (\PRM_d(2))$, and $Y$ as in Theorem \ref{cotaghw}. 
\begin{enumerate}[wide, labelwidth=!, labelindent=0pt]
\item[(a)] If $d<q^s$, we consider $\alpha_0$ the smallest integer such that $d_{r-\alpha_0}(\RM_{d-1}(2))\leq d_{r}(\RM_d(2))$, $\mu_0=\max\{\alpha_0,r-\dim \RM_{d-1}(2)\}$ and $\lambda=\min\{d+1,r\}$. Then
$$
d_r(\PRM_d(2))\geq \min_{(\alpha,\gamma)\in Y}B_{\alpha,\gamma}=\begin{cases}
\min\{B_{0,0},H_{\alpha_0,0} \} &\text{ if } r\leq \dim \RM_{d-1}(2),\\
B_{\alpha_0,0} &\text{ if } r>\dim \RM_{d-1}(2),
\end{cases}
$$
where $B_{0,0}=d_r(\RM_{d-1}(2))$ and 
$$
H_{\alpha_0,0}=\begin{cases}
d_r(\RM_d(2))+q^s-d+\mu_0 &\text{ if } \alpha_0\leq \lambda,\\
d_{r-\lambda}(\RM_{d-1}(2))+q^s-d+\lambda &\text{ if }\alpha_0> \lambda.
\end{cases}
$$

\item[(b)] If $d\geq q^s$, we consider 
$$
E=\{\gamma \mid \max\{r-\dim \RM_d(2),0\}\leq \gamma \leq \min \{d-q^s+2,r\}\}.
$$
Let $i\leq \alpha_i\leq r$ be the smallest integer such that $d_{r-\alpha_i}(\RM_{d-1}(2))\leq d_{r-i}(\RM_d(2))$, for $i\in E$. We also consider $\mu_i=\max\{\alpha_i,r-\dim \RM_{d-1}(2)\}$ and $\lambda=\min \{q^s+1,r\}$. Then
\begin{equation}\label{igualdadBm2}
d_r(\PRM_d(2))\geq \min_{(\alpha,\gamma)\in Y}B_{\alpha,\gamma}=\begin{cases}
\min\{B_{0,0},\min_{\gamma\in E} \{H_{\alpha_\gamma,\gamma} \} \} &\text{ if } r\leq \dim \RM_{d-1}(2),\\
\min_{\gamma\in E} \{H_{\alpha_\gamma,\gamma} \} &\text{ if } r> \dim \RM_{d-1}(2),
\end{cases}
\end{equation}
where $B_{0,0}=d_r(\RM_{d-1}(2))$ and 
\begin{equation}\label{expH}
H_{\alpha_\gamma,\gamma}=\begin{cases}
d_{r}(\RM_d(2))+\mu_0 &\text{ if } \gamma=0, \alpha_\gamma \leq  \lambda, \\
d_{r-\lambda}(\RM_{d-1}(2))+\lambda &\text{ if } \gamma=0, \alpha_\gamma >  \lambda, \\
d_{r-\gamma}(\RM_d(2))+\max\{\mu_\gamma, 2q^s-d+\gamma-1\} &\text{ if } \gamma>0, \alpha_\gamma \leq  \lambda, \\
d_{r-\lambda}(\RM_{d-1}(2))+\max\{\lambda, 2q^s-d+\gamma-1\} &\text{ if } \gamma>0, \alpha_\gamma >  \lambda. \\
\end{cases}
\end{equation}
\end{enumerate}
Moreover, in both cases if $\min_{(\alpha,\gamma)\in Y}B_{\alpha,\gamma}=B_{0,0}=d_r(\RM_{d-1}(2))$ or $\min_{(\alpha,\gamma)\in Y}B_{\alpha,\gamma}=q^s\cdot \max\{q^s-d+r,r\},$ for $r\leq \dim\RM_{d-1}(2)$ or $r\leq \min\{d+1,q^s+1\}$, respectively, then the bound is sharp. 
\end{thm}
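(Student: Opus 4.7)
The plan is to specialize $B_{\alpha,\gamma}$ from Theorem \ref{cotaghw} to $m=2$, where $\PRM_d(m-1)=\PRM_d(1)$ is a projective Reed-Solomon (hence MDS) code with $d_r(\PRM_d(1))=\max\{q^s-d+r,r\}$, and then minimize $B_{\alpha,\gamma}$ over $Y$ by a monotonicity argument in $\alpha$ for each admissible $\gamma$. Plugging the MDS formula into the second maximum of $B_{\alpha,\gamma}$ and using $d\leq 2(q^s-1)$ to obtain $d_\gamma(\PRM_{d-(q^s-1)}(1))=2q^s-1-d+\gamma$ for $\gamma\geq 1$, that maximum collapses to $q^s-d+\alpha$ when $d<q^s$ (where Remark \ref{remdmenorq} forces $\gamma=0$), to $\alpha$ when $d\geq q^s$ and $\gamma=0$, and to $\max\{\alpha,2q^s-1-d+\gamma\}$ when $d\geq q^s$ and $\gamma\geq 1$. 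This produces the three regimes addressed in cases (a) and (b).

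The core step is the monotonicity analysis of $\alpha\mapsto B_{\alpha,\gamma}$ for fixed $\gamma$. By the strict monotonicity of generalized Hamming weights (Theorem \ref{monotonia}), $d_{r-\alpha}(\RM_{d-1}(2))$ decreases by at least one each time $\alpha$ increases by one, while the second summand grows by at most one. A direct comparison shows that $B_{\alpha,\gamma}$ is non-increasing for $\alpha$ up to $\alpha_\gamma$ and non-decreasing for $\alpha\geq\alpha_\gamma$, so its minimum over the admissible interval is attained at $\alpha=\mu_\gamma$ if $\mu_\gamma\leq \lambda$ and at $\alpha=\lambda$ otherwise (one verifies that $r-\dim\RM_{d-1}(2)\leq\lambda$ always, so $\mu_\gamma>\lambda$ forces $\alpha_\gamma>\lambda$). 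Substituting these $\alpha$-values and splitting according to whether $\mu_\gamma$ (respectively $\lambda$) dominates $2q^s-1-d+\gamma$ in the second maximum yields the four sub-cases of $H_{\alpha_\gamma,\gamma}$ in (\ref{expH}); separately, $\alpha=0$ contributes the additional candidate $B_{0,0}=d_r(\RM_{d-1}(2))$, but only when $r\leq\dim\RM_{d-1}(2)$, explaining the case split in both (a) and (b). The overall minimum then follows by ranging over admissible $\gamma$: only $\gamma=0$ in (a), and $\gamma\in E$ in (b), where the constraints $\gamma\geq r-\dim\RM_d(2)$ and $\gamma\leq\dim\PRM_{d-(q^s-1)}(1)=d-q^s+2$ define $E$.

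For the sharpness assertion I will invoke Lemma \ref{lemasup}, which supplies the upper bounds $d_r(\PRM_d(2))\leq d_r(\RM_{d-1}(2))=B_{0,0}$ (valid when $r\leq\dim\RM_{d-1}(2)$) and $d_r(\PRM_d(2))\leq q^s\cdot d_r(\PRM_d(1))=q^s\max\{q^s-d+r,r\}$ (valid when $r\leq\min\{d+1,q^s+1\}$); whenever the lower bound from Theorem \ref{cotaghw} coincides with one of these, the bound is tight. The main obstacle I anticipate is the bookkeeping in the monotonicity step: one must combine the strict monotonicity of generalized Hamming weights with careful tracking of the regime in which $\alpha$ lies relative to both $\alpha_\gamma$ and $2q^s-1-d+\gamma$, in order to pin down the exact form of $H_{\alpha_\gamma,\gamma}$, and one must handle the boundary case where $r-\dim\RM_{d-1}(2)$ exceeds $\alpha_\gamma$, which is precisely why $\mu_\gamma$ rather than $\alpha_\gamma$ appears in the final formula.
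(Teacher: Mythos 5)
Your proposal is correct and follows essentially the same route as the paper's proof: specializing $B_{\alpha,\gamma}$ via the MDS formula for $\PRM_d(1)$ and $\PRM_{d-(q^s-1)}(1)$, splitting $Y$ into slices $Y_\gamma$ for $\gamma\in E$, showing $B_{\alpha,\gamma}$ is nonincreasing in $\alpha$ before $\alpha_\gamma$ and nondecreasing after (so the minimum sits at $\mu_\gamma$ or at $\lambda$), treating $\alpha=0$ as the separate candidate $B_{0,0}$ available exactly when $r\leq\dim\RM_{d-1}(2)$, and deducing sharpness from Lemma \ref{lemasup}. The only detail to make explicit in a full write-up is the comparison across the turning point (the paper checks $B_{\alpha_\gamma-1,\gamma}-B_{\alpha_\gamma,\gamma}\geq 0$ directly), which your "bookkeeping" caveat already anticipates.
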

\begin{proof}
We prove the statement for $d\geq q^s$ and $r\leq \dim \RM_{d-1}(2)$ because the rest of the cases are argued in the same way (for instance, the case $d<q^s$ is analogous to the case of $\gamma=0$ with $d\geq q^s$, see Remark \ref{remdmenorq}). If we prove the equality in (\ref{igualdadBm2}), the rest follows from Theorem \ref{cotaghw} and Lemma \ref{lemasup}.

For $d\geq q^s$, we can rewrite the set $Y$ from Theorem \ref{cotaghw} for the case $m=2$ in the following way:
$$
    Y = \left\{(\alpha,\gamma): \begin{array}{c}
        \max\{r-\dim \RM_{d-1}(2),\gamma\} \leq  \alpha \leq \min \{q^s+1,r\}\\
        \max\{r-\dim\RM_d(2),0\}\leq \gamma \leq  \min\{d-q^s+2,r\} \\
        \end{array}\right\},
$$
where we have used the expressions for the dimension of the corresponding projective Reed-Solomon codes and we have written the conditions for $\alpha$ in terms of $\gamma$. For each $\gamma\in E$ we consider 
$$
Y_\gamma=\{ \alpha\mid (\alpha,\gamma)\in Y \}.
$$
This set is nonempty for each $\gamma\in E$ because $\gamma\leq \min\{d-q^s+2,r\}\leq \min\{q^s+1,r\}$, and we also have $r-\dim\RM_{d-1}(2)\leq \min\{q^s+1,r\}$ because of Corollary \ref{corodimRM}. We have that $Y=\bigcup_{\gamma\in E}Y_\gamma$, and therefore
$$
\min_{(\alpha,\gamma)\in Y}B_{\alpha,\gamma}=\min_{\gamma \in E}\{\min_{\alpha \in Y_\gamma}B_{\alpha,\gamma}\}.
$$

For $\gamma\in E$ fixed, we study now the behavior of $B_{\alpha,\gamma}$, as defined in (\ref{defcota}), as a function of $\alpha$. We first consider the case with $\gamma>0$. As we are assuming $2(q^s-1)\geq d\geq q^s$, we have that $\PRM_d(1)=\fqs^{q^s+1}$. Therefore, for $\alpha\geq \gamma>0$, we have 
$$
\max\{d_\alpha(\PRM_d(1)),d_\gamma(\PRM_{d-(q^s-1)}(1))\}=\max\{\alpha,2q^s-d+\gamma-1\}.
$$
For $\gamma>0$ fixed, $\max\{\alpha,2q^s-d+\gamma-1\}$ is a nondecreasing function of $\alpha$. For $1\leq \alpha < \alpha_\gamma$ we have that 
$$
\max\{d_{r-\gamma}(\RM_d(2)),d_{r-\alpha}(\RM_{d-1}(2))\}=d_{r-\alpha}(\RM_{d-1}(2)). 
$$
Using Theorem \ref{monotonia}, we see that this value decreases in at least one unit when we increase $\alpha$ in one unit, while $\max\{\alpha,2q^s-d+\gamma-1\}$ is going to increase by at most one unit. Therefore, $B_{\alpha,\gamma}$ is a nonincreasing function of $\alpha$, for $1\leq \alpha <\alpha_\gamma$.

For $\alpha\geq \alpha_\gamma$, we have 
$$
\max\{d_{r-\gamma}(\RM_d(2)),d_{r-\alpha}(\RM_{d-1}(2))\}=d_{r-\gamma}(\RM_d(2)),
$$
which is constant for a fixed $\gamma$. As $\max\{\alpha,2q^s-d+\gamma-1\}$ is nondecreasing, we obtain that $B_{\alpha,\gamma}$ is a nondecreasing function of $\alpha$, for $\alpha_\gamma\leq \alpha\leq r$. 

Therefore, we know the behavior of $B_{\alpha,\gamma}$ and we can obtain $\min_{\alpha \in Y_\gamma}B_{\alpha,\gamma}$. Indeed, for $1\leq\alpha <\alpha_\gamma$, $B_{\alpha,\gamma}$ is nonincreasing, and therefore, $\min_{\alpha\in Y_\gamma,\alpha<\alpha_\gamma}B_{\alpha,\gamma}$ is attained at the largest value of $\alpha \in Y_\gamma$ such that $\alpha<\alpha_\gamma$ (if any). And, as $B_{\alpha,\gamma}$ is nondecreasing for $\alpha_\gamma\leq \alpha\leq r$, we have that $\min_{\alpha\in Y_\gamma,\alpha\geq \alpha_\gamma}B_{\alpha,\gamma}$ is attained at the lowest value of $\alpha\in Y_\gamma$ such that $\alpha\geq \alpha_\gamma$. To obtain $\min_{\alpha \in Y_\gamma}B_{\alpha,\gamma}$, we just need to consider the minimum between the minimums in the case $1\leq \alpha<\alpha_\gamma$ and $\alpha_\gamma\leq\alpha\leq r$. We have two cases:
\begin{enumerate}
    \item[(a)] If $\alpha_\gamma\leq \lambda$: by definition, $\alpha_\gamma\geq \gamma$. Therefore, we have $\alpha_\gamma\in Y_\gamma$ if and only if $r-\dim\RM_{d-1}(2)\leq \alpha_\gamma$, which happens if and only if $\mu_\gamma=\alpha_\gamma$. If $\alpha_\gamma> r-\dim\RM_{d-1}(2)$, all the values of $Y_\gamma$ are larger than $\alpha_\gamma$, i.e., we are in the nondecreasing part of $B_{\alpha,\gamma}$, and the minimum of $B_{\alpha,\gamma}$ over $Y_\gamma$ is thus obtained at the lowest value of $Y_\gamma$, which is precisely $r-\dim \RM_{d-1}(2)=\mu_\gamma$ in this case. By the definition of $\alpha_\gamma$, we have
    $$
    B_{\mu_\gamma,\gamma}=d_{r-\gamma}(\RM_d(2))+\max \{\mu_\gamma,2q^s-d+\gamma-1\}.
    $$
    On the other hand, if $\alpha_\gamma\leq r-\dim \RM_{d-1}(2)$, i.e., $\alpha_\gamma=\mu_\gamma$, then the minimum in the nondecreasing part is $B_{\alpha_\gamma,\gamma}=B_{\mu_\gamma,\gamma}$ as above. If $\alpha_\gamma-1\not\in Y_\gamma$, this is the minimum over $Y_\gamma$. If $\alpha_\gamma-1\in Y_\gamma$, we have to also consider the minimum over the nonincreasing part, which, taking into account the definition of $\alpha_\gamma$, would be
    $$
    B_{\alpha_\gamma-1,\gamma}=d_{r-(\alpha_\gamma-1)}(\RM_{d-1}(2))+\max\{\alpha_\gamma-1,2q^s-d+\gamma-1\}.  
    $$
    If we compute the difference we obtain
    $$
    B_{\alpha_\gamma-1,\gamma}-B_{\alpha_\gamma,\gamma}\geq d_{r-(\alpha_\gamma-1)}(\RM_{d-1}(2))-d_{r-\gamma}(\RM_d(2))-1\geq 0
    $$
    because of the definition of $\alpha_\gamma$. Hence, in this case we also obtain $\min_{\alpha\in Y_\gamma} B_{\alpha,\gamma}=B_{\mu_\gamma,\gamma}$.
    \item[(b)] If $\alpha_\gamma >\lambda$: we have that all the values of $Y_\gamma$ are below $\alpha_\gamma$, i.e., we are in the nonincreasing part of $B_{\alpha,\gamma}$. The minimum is thus obtained at the maximum value in $Y_\gamma$, which is $\lambda$, and by the definition of $\alpha_\gamma$ we have
    $$
    B_{\lambda,\gamma}=d_{r-\lambda}(\RM_{d-1}(2))+\max \{\lambda,2q^s-d+\gamma-1\}.
    $$
\end{enumerate}
For the case with $\gamma=0$, we have $\max\{d_\alpha(\PRM_d(1)),d_\gamma(\PRM_{d-(q^s-1)}(1))\}=\alpha$. The previous argument also applies in this case for the $\alpha\in Y_0$ with $\alpha>0$ because this does not change the behaviour of $B_{\alpha,\gamma}$ (it only changes the exact expression of $B_{\alpha,\gamma}$, as seen in the statement). Therefore the minimum of $B_{\alpha,\gamma}$ in $Y_0\setminus \{0\}$ is either $B_{\mu_\gamma,\gamma}$ or $B_{\lambda,\gamma}$ as before. If $0\in Y_0$, which happens if and only if $r\leq \dim \RM_{d-1}(2)$, we also have to take into account for the minimum the bound $B_{0,0}=d_r(\RM_{d-1}(2))$. We complete the proof by noting that $H_{\alpha_\gamma,\gamma}$ is equal to $B_{\mu_\gamma,\gamma}$ or $B_{\lambda,\gamma}$, depending on where the minimum is attained in each case. 
\end{proof}


One can also express the previous result in a more explicit way by taking into account that
\begin{equation}\label{dimRM}
\dim \RM_d(2)=\begin{cases}
    \binom{d+2}{2} &\text{ if } d<q^s,\\
    \binom{d+2}{2}-2\binom{d-q^s+2}{2} &\text{ if } q^s\leq d\leq 2(q^s-1),\\
\end{cases}
\end{equation}
which can be proven from Theorem \ref{paramRM}. 

Although Theorem \ref{cotam2} looks more involved than Theorem \ref{cotaghw}, it can greatly simplify the procedure of computing the bound from Theorem \ref{cotaghw} for the case $m=2$, as we show in the next example. 

\begin{ex}\label{exm2}
Let $q^s=4$, $d=5$, $m=2$ and $r=5$. We are going to use Theorem \ref{cotam2} to obtain a bound for $d_4(\PRM_5(2))$. As we have $d\leq q^s$, we compute $\dim \RM_5(2)=15$ with (\ref{dimRM}) and we obtain
$$
E=\{\gamma\mid 0\leq \gamma\leq 3\}.
$$
Using Theorem \ref{ghwRM}, we can compute
\begin{table}[ht]
\centering
\begin{tabular}{||c||c|c|c|c|c||}
\hline
    $r$ & 1&2&3&4&5 \\
    \hline \hline
    $d_r(\RM_5(2))$ & 2&3&4&5&6\\
    $d_r(\RM_4(2))$ & 3&4&6&7&8\\
    \hline
\end{tabular}
\end{table}

From this table we obtain $\alpha_0=2$, $\alpha_1=\alpha_2=3$ and $\alpha_3=4$. For example, we check that $d_{5-3}(\RM_4(2))=4\leq 5=d_{5-2}(\RM_{5-1}(2))$, but $d_{5-2}(\RM_4(2))=6>5=d_{5-2}(\RM_{5-1}(2))$, which implies $\alpha_2=3$. Using (\ref{dimRM}) again, we can compute $\dim \RM_4(2)=13$, which implies that $r-\dim \RM_4(2)<0$ and $\mu_i=\alpha_i$ for $i=0,1,2,3$. We also have $\lambda=\min\{q^s+1,r\}=5$. Thus, $\alpha_i<\lambda$ for $i=0,1,2,3$. The only thing left to do is to compute $B_{0,0}$ and $H_{\alpha_\gamma,\gamma}$, for $\gamma \in E$. We obtain
$$
\begin{aligned}
B_{0,0}&=d_5(\RM_4(2))=8, \\
H_{2,0}&=d_5(\RM_5(2))+\mu_0=6+2=8,\\
H_{3,1}&=d_4(\RM_5(2))+\max\{\mu_1,2q^s-d+1-1\}=5+\max\{3,3\}=8,\\
H_{3,2}&=d_3(\RM_5(2))+\max\{\mu_2,2q^s-d+2-1\}=4+\max\{3,4\}=8,\\
H_{4,3}&=d_2(\RM_5(2))+\max\{\mu_3,2q^s-d+3-1\}=3+\max\{4,5\}=8.\\
\end{aligned}
$$
The minimum of these values is $8$, and we have $d_5(\PRM_5(2))\geq 8$. Furthermore, as the minimum is equal to $B_{0,0}$, by Theorem \ref{cotam2} (or Lemma \ref{lemasup}) we have the equality $d_5(\PRM_5(2))= 8$.

Note that if we want to use Theorem \ref{cotaghw} to obtain the bound for $d_5(\PRM_5(2))$, we would have to consider the minimum of $\abs{Y}$ terms. In this case, we have
$$
    Y = \left\{(\alpha,\gamma): \begin{array}{c}
        \gamma \leq  \alpha \leq \min\{q^s+1,r\}=5\\
       0\leq \gamma \leq  \min\{d-q^s+2,r\}=3 \\
        \end{array}\right\}.
$$
Thus, using Theorem \ref{cotaghw} we would have to consider the minimum of $\abs{Y}=18$ terms, while using Theorem \ref{cotam2} we only needed 5. 
\end{ex}

We now compare our results with the current literature on the generalized Hamming weights of projective Reed-Muller codes. Theorem \ref{cotaghw} provides a lower bound for the generalized Hamming weights for any degree $d$. Most of the literature, like \cite{beelenGHWPRM,singhalGHWPRM}, focuses on the case $d\leq q^s-1$. Thus, covering the case $d\geq q^s$ is one of the main contributions of this work with respect to the current literature. For the case $d\leq q^s-1$, the reference \cite{beelenGHWPRM} provides a conjectural formula for the generalized Hamming weights, which is also proved for several values of $r$.  Moreover, in \cite{singhalGHWPRM}, the authors prove the aforementioned conjecture for $q^s$ sufficiently large. In the examples we have checked, we have not found a counterexample to this conjecture using our bounds, which provides some computational support to this conjecture.

\subsection{A bound for the generalized Hamming weights of the subfield subcodes of projective Reed-Muller codes}
The subfield subcodes of Section \ref{secssc} are subcodes of projective Reed-Muller codes and therefore the generalized Hamming weights of projective Reed-Muller codes give lower bounds for the generalized Hamming weights of the corresponding subfield subcodes. However, as the dimension of the subfield subcode is usually much smaller than the dimension of the original code, this bound is not sharp for most of the generalized Hamming weights. Nevertheless, Theorem \ref{cotaghw} and Lemma \ref{lemasup} can be adapted for the subfield subcodes as well. 

\begin{cor}\label{ghwssc}
Let $\xi\in \fqs$ be a primitive element and $m>1$.  Let ${d_\lambda}=\lambda\frac{q^s-1}{q-1}$ for some $\lambda\in \{ 1,2,\dots,m(q-1)\}$ and $2\leq r\leq \dim(\PRM^\sigma_d(m))$. We consider
$$
    Y^\sigma = \left\{(\alpha,\gamma): \begin{array}{c}
        \max\{r-\dim \RM^\sigma_{{d_\lambda}-1}(m),0\} \leq  \alpha \leq \min \{ \dim \PRM^\sigma_{d_\lambda}(m-1),r\}\\
        \max\{r-\dim\RM^\sigma_{d_\lambda}(m),0\}\leq \gamma \leq  \min\{\dim \PRM^\sigma_{{d_\lambda}-(q^s-1)}(m-1),\alpha\} \\
        \end{array}\right\}.
$$
Then we have
$$
d_r(\PRM^\sigma_{d_\lambda}(m))\geq \min_{(\alpha,\gamma)\in Y^\sigma}B_{\alpha,\gamma}^\sigma,
$$
where 
$$
\begin{aligned}
B_{\alpha,\gamma}^\sigma:= \max(d_{r-\gamma}(\RM^\sigma_{d_\lambda}(m)),&d_{r-\alpha}(\RM^\sigma_{{d_\lambda}-1}(m))) \\
&+\max(d_{\alpha}(\PRM^\sigma_{d_\lambda}(m-1)),d_\gamma(\PRM^\sigma_{{d_\lambda}-(q^s-1)}(m-1))).
\end{aligned}
$$
\end{cor}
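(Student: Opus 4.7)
The plan is to replay the proof of Theorem \ref{cotaghw} in the subfield subcode setting, substituting the recursive construction from Corollary \ref{recursivessc} for Theorem \ref{recursive}. The key enabling fact is already baked into the hypothesis on $d_\lambda$: since $(\xi^{d_\lambda})^{q-1}=\xi^{\lambda(q^s-1)}=1$, we have $\xi^{d_\lambda}\in\F_q$, so the vector $v_{\xi,d_\lambda}$ has entries in $\F_q$ whenever $v$ does, and all scalar coefficients that appear in the expansion $(u+v_{\xi,d_\lambda},v)$ with respect to the natural basis can be kept in $\F_q$.

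First I would pick a basis $\{u^i\}_{i=1}^{k_u^\sigma}$ of $\RM_{d_\lambda-1}^\sigma(m)$ and a basis $\{v^j\}_{j=1}^{k_v^\sigma}$ of $\PRM_{d_\lambda}^\sigma(m-1)$, and use Corollary \ref{recursivessc} to assemble them into a basis $\mathcal{B}=\{u^i\times\{0\}^{p_{m-1}}\}_i\cup\{(v^j_{\xi,d_\lambda},v^j)\}_j$ of $\PRM_{d_\lambda}^\sigma(m)$. For an arbitrary subcode $D\subseteq\PRM_{d_\lambda}^\sigma(m)$ of dimension $r$, I would expand a basis of $D$ in $\mathcal{B}$ with coefficients $\lambda_{l,i},\mu_{l,j}\in\F_q$, split each basis vector as $b_l=(b_{l,1},b_{l,2})$ exactly as in (\ref{defbl}), and perform the same sequence of row operations over $\F_q$ to bring the generator matrix into the block form (\ref{formaG}) with integers $\alpha,\gamma$.

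The membership analysis from Theorem \ref{cotaghw} then transfers verbatim, with the observation that every intermediate vector has coordinates in $\F_q$: when $b_{l,2}=0$, the polynomial description places $b_{l,1}$ in $\RM_{d_\lambda-1}(m)\cap\F_q^{q^{sm}}=\RM_{d_\lambda-1}^\sigma(m)$; in general $b_{l,1}\in\RM_{d_\lambda}^\sigma(m)$ and $b_{l,2}\in\PRM_{d_\lambda}^\sigma(m-1)$; and when $b_{l,1}=0$, the exponent-reduction argument modulo $\langle x_1^{q^s}-x_1,\ldots,x_m^{q^s}-x_m\rangle$ deposits $b_{l,2}$ in $\PRM_{d_\lambda-(q^s-1)}(m-1)$, which intersected with $\F_q^{p_{m-1}}$ yields $\PRM_{d_\lambda-(q^s-1)}^\sigma(m-1)$. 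Bounding $|\Supp(G_1)|$ and $|\Supp(G_2)|$ by the generalized Hamming weights of the corresponding subfield subcodes and summing gives $B_{\alpha,\gamma}^\sigma$; taking the minimum over the admissible $(\alpha,\gamma)\in Y^\sigma$ completes the proof. The only subtle point worth double-checking is that the row operations can indeed be restricted to $\F_q$ without changing the final shape of $G$, but this is immediate because $D$ is an $\F_q$-linear subspace and all blocks produced by the normalization live in $\F_q$-linear subcodes of known $\F_q$-dimension (provided by Corollary \ref{recursivessc}).
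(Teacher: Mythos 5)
Your proposal is correct and follows exactly the route the paper takes: the paper's own proof simply invokes the recursive construction of Corollary \ref{recursivessc} and says to repeat the argument of Theorem \ref{cotaghw}, which is precisely what you have spelled out. Your extra care about keeping the row operations and coefficients in $\F_q$, and intersecting each intermediate code with $\F_q^n$ to land in the corresponding subfield subcode, fills in the details the paper leaves implicit and is sound.
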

\begin{proof}
We consider the recursive construction from Corollary \ref{recursivessc}, and we apply an analogous reasoning to the one above Theorem \ref{cotaghw}. 
\end{proof}

\begin{cor}\label{ghwssc2}
Let $\xi\in \fqs$ be a primitive element and $m>1$. Let ${d_\lambda}=\lambda\frac{q^s-1}{q-1}$ for some $\lambda\in \{ 1,2,\dots,m(q-1)\}$ and $2\leq r\leq \max\{\dim \RM^\sigma_{{d_\lambda}-1}(m),\dim \PRM^\sigma_{d_\lambda}(m-1) \}$. Then
$$
d_r(\PRM^\sigma_{d_\lambda}(m))\leq \min\{ d_r(\RM^\sigma_{{d_\lambda}-1}(m)),q^s\cdot d_r(\PRM^\sigma_{d_\lambda}(m-1))\},
$$
where if $r> \dim\RM^\sigma_{{d_\lambda}-1}(m)) $ or $r> \dim \PRM^\sigma_{d_\lambda}(m-1)$ we do not consider the corresponding generalized Hamming weight in the minimum.
\end{cor}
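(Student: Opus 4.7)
The plan is to mirror the proof of Lemma \ref{lemasup}, with Corollary \ref{recursivessc} playing the role that Theorem \ref{recursive} played there. I would exhibit, for each of the two quantities in the minimum, an explicit $\F_q$-linear subcode $E'\subset \PRM^\sigma_{d_\lambda}(m)$ of dimension $r$ whose support has the desired cardinality.

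For the bound $d_r(\RM^\sigma_{d_\lambda-1}(m))$, assuming $r\leq \dim \RM^\sigma_{d_\lambda-1}(m)$, I would take a subcode $E\subset \RM^\sigma_{d_\lambda-1}(m)$ of dimension $r$ with $\abs{\Supp(E)}=d_r(\RM^\sigma_{d_\lambda-1}(m))$ and set $E':=E\times \{0\}^{p_{m-1}}$. Setting $v=0$ in the recursive description of Corollary \ref{recursivessc} places $E'$ inside $\PRM^\sigma_{d_\lambda}(m)$, and both its dimension and its support are inherited unchanged from $E$.

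For the bound $q^s\cdot d_r(\PRM^\sigma_{d_\lambda}(m-1))$, assuming $r\leq \dim \PRM^\sigma_{d_\lambda}(m-1)$, I would pick $E\subset \PRM^\sigma_{d_\lambda}(m-1)$ of dimension $r$ with $\abs{\Supp(E)}=d_r(\PRM^\sigma_{d_\lambda}(m-1))$ and form
$$
E':=\{(v_{\xi,d_\lambda},v):v\in E\}.
$$
By Corollary \ref{recursivessc} with $u=0$, this is a subcode of $\PRM_{d_\lambda}(m)$. The crucial observation, already used in the proof of Corollary \ref{recursivessc}, is that $\xi^{d_\lambda}\in \F_q$, which forces every coordinate of $v_{\xi,d_\lambda}$ to lie in $\F_q$ and therefore places $E'$ inside $\PRM^\sigma_{d_\lambda}(m)$. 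Since the map $v\mapsto (v_{\xi,d_\lambda},v)$ is injective and $\F_q$-linear, $\dim E'=r$. Finally, because each scalar $\xi^{id_\lambda}$ is a unit, a single position in $\Supp(v)$ contributes one nonzero coordinate in each of the $q^s-1$ blocks $\{1\}\times \xi^i\cdot P^{m-1}$ together with one in the block $\{0\}\times P^{m-1}$, yielding $\abs{\Supp(E')}=q^s\cdot \abs{\Supp(E)}$.

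The main (and essentially only) subtlety is verifying that the subcode $E'$ constructed in the second step really lies inside the subfield subcode, and this reduces to the identity $\xi^{d_\lambda}\in \F_q$ built into the hypothesis on $d_\lambda$. Everything else is a direct transcription of the argument of Lemma \ref{lemasup}, so I do not expect any further obstacle.
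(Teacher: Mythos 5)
Your proof is correct and is exactly the argument the paper intends: its own proof of this corollary is the one-line instruction to ``consider the recursive construction from Corollary \ref{recursivessc} and argue as in the proof of Lemma \ref{lemasup},'' which is precisely what you carry out, including the key check that $\xi^{d_\lambda}\in\F_q$ keeps the constructed subcodes inside the subfield subcode.
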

\begin{proof}
We consider the recursive construction from Corollary \ref{recursivessc} and argue as in the proof of Lemma \ref{lemasup}. 
\end{proof}

With respect to the values that appear in these results, we have formulas for the dimension of the subfield subcodes of affine Reed-Muller codes \cite[Thm. 11]{galindolcd} and for the subfield subcodes of projective Reed-Muller codes over $\mathbb{P}^2$ \cite[Cor. 3.41]{sanjoseSSCPRM}. However, we do not have explicit results about the generalized Hamming weights of the subfield subcodes in the affine case or the projective case, besides the bounds given by the generalized Hamming weights of the affine or projective Reed-Muller codes. Therefore, although Corollary \ref{ghwssc} and Corollary \ref{ghwssc2} are of theoretical interest, their practical utility is more limited than that of Theorem \ref{cotaghw} and Lemma \ref{lemasup}. 

\subsection{Examples}\label{secex}
In this subsection we provide tables showing examples of how one can use the results in Section \ref{secghw} to determine the generalized Hamming weights of some projective Reed-Muller codes. In order to use the bound from Theorem \ref{cotaghw}, we use the true values for the minimum distances of projective Reed-Muller codes from Theorem \ref{paramPRM}. We will use Theorem \ref{cotaghw} to provide a lower bound, and we can use Lemma \ref{lemasup} to give an upper bound in some cases. Moreover, if we know the true value of a generalized Hamming weight $d_r(\PRM_d(m))$, for $r'\leq r$ we know that $d_{r'}(\PRM_d(m))\leq d_r(\PRM_d(m))-(r-r')$ due to the monotonicity from Theorem \ref{monotonia}, which provides another upper bound. With this we can obtain the tables that we present below. Note that we are only using the bounds that we have just stated, we are not considering the values obtained in other papers such as \cite{beelenGHWPRM}. For instance, by considering the values given in Example \ref{exnosharp}, we see that $d_2(\PRM_3(2))=11$ for $q^s=4$. Looking at Table \ref{q4m2}, we see that this also implies that $d_3(\PRM_3(2))=12$ by Theorem \ref{monotonia}, and we would therefore obtain all the generalized Hamming weights of $\PRM_3(2)$ for $q^s=4$. Nevertheless, as it is seen in the tables, we can still obtain many of the true values of the generalized Hamming weights for projective Reed-Muller codes. 

With respect to the notation, we will use dots when the generalized Hamming weights grow by one unit when increasing $r$. Unless stated otherwise, the value of the bound from Theorem \ref{cotaghw} coincides with the value of the generalized Hamming weight that appears in the table. When that does not happen (or if we do not know the true value), we will write the lower bound from Theorem \ref{cotaghw} and the best of the upper bounds that we have discussed above. We note that, for all the values that we omit by using dots, the bound from Theorem \ref{cotaghw} is sharp. 

As we stated previously, for $q^s=2$ we have checked by computer that we obtain the true values of $d_r(\PRM_d(m))$ for $m=2,3$. However, this case is not that important in the projective setting because, for $q^s=2$, we have $\mathbb{P}^m=\mathbb{A}^{m+1}\setminus \{(0,\dots,0)\}$. Therefore, the projective Reed-Muller codes over $\F_2$ in $\mathbb{P}^m$ are equal to the shortening of affine Reed-Muller codes over in $m+1$ variables over $\F_2$ at the point $(0,\dots,0)$. 

\begin{table}[ht]
\caption{Generalized Hamming weights for $q^s=3$, $m=2$.}
\label{q3m2}
\centering
\begin{tabular}{c|cccccccccccc}
$d\backslash r$ &2 &3 &4 &5 &6 &7 &8 &9 &10 &11 &12&13\\
\hline
1 &12 &13 & & & & & & & & & & \\
2 &8 &9 &11 &12 &13 & & & & & & & \\
3 &4-5 &6 &7 &8 &9 &10 &11 &12 &13 & & & \\
4 &3 &4 &5 &6 &7 &8 &9 &10 &11 &12 &13 & \\
\end{tabular}
\end{table}

\begin{table}[ht]
\caption{Generalized Hamming weights for $q^s=3$, $m=3$.}
\label{q3m3}
\centering

\begin{tabular}{c|ccccccccccccccccccccccccccc}
$d\backslash r$ &2 &3 &4 &5 &6 &7 &8 &9 &10 &11 &12 &13& $\cdots$&20\\
\hline
1 &36 &39 &40  \\
2 &23-24 &26 &27 &32-35 &35-36 &36-37 &38 &39 &40 \\
3 &12 &13-17 &18 &21 &22-23 &24 &25 &26 &27 &30-31 &31-32 &33 &$\cdots $& 40 \\
\end{tabular}

\begin{tabular}{c|ccccccccccccccccccccccccccccccccccccccccccc}
$d\backslash r$ &2 &3 &4 &5 &6 &7 &8 &9 &10 &11 &$\cdots$ &17&18& $\cdots$&29\\
\hline
4 &8 &9 &11-12 &12-14 &13-15 &15-16 &17 &18 &20 &21&$\cdots$& 27 &29&$\cdots$ &40 \\
\end{tabular}

\begin{tabular}{c|ccccccccccccc}
$d\backslash r$ &2 &3 &4 &5 &6 &7 &8 &9 &10 &11& $\cdots$&36\\
\hline
5 &4-5 &6 &7 &8 &9 &10-11 &11-12 &12-13 &13-14 &15 &$\cdots $&40 \\
\end{tabular}

\begin{tabular}{c|ccccccccccccc}
$d\backslash r$ &2& $\cdots$&39\\
\hline
6 &3 &$\cdots $&40 \\
\end{tabular}
\end{table}

\begin{table}[ht]
\caption{Generalized Hamming weights for $q^s=4$, $m=2$.}
\label{q4m2} 
\centering
\begin{tabular}{c|ccccccccccccccccccccc}
$d\backslash r$& 2 &3 &4 &5 &6 &7 &8 &9 &10 &11 & $\cdots$ &20 \\
\hline
1 &20 &21 \\
2 &15 &16 &19 &20 &21 \\
3 &10-11 &11-12 &14 &15 &16 &18 &19 &20 &21 \\
4 &5-7 &8 &9-10 &10-11 &12 &13 &14 &15 &16 &17&$\cdots$   \\
5 &4 &5-6 &7 &8 &9 &10 &11 &12 &13 &14& $\cdots$   \\
6 &3 &4 &5 &6 &7 &8 &9 &10 &11 &12& $\cdots$ &21  \\
\end{tabular}
\end{table}

\begin{table}[ht]
\caption{Generalized Hamming weights for $q^s=5$, $m=2$.}
\label{q5m2} 
\centering
\begin{tabular}{c|ccccccccccccccccccccc}
$d\backslash r$& 2 &3 &4 &5 &6 &7 &8 &9 &10 &11 &$\cdots$ &30 \\
\hline
1 &30 &31 \\
2 &24 &25 &29 &30 &31 \\
3 &18-19 &19-20 &23 &24 &25 &28 &29 &30 &31 \\
4 &12-14 &13-15 &17-18 &18-19 &19-20 &22 &23 &24 &25 &27 &$\cdots$& \\
5 &6-9 &10 &11-13 &12-14 &15 &16-17 &17-18 &18-19 &20 &21  &$\cdots $ & \\
6 &5 &6-8 &9 &10 &11-12 &12-13 &14 &15 &16 &17  &$\cdots $ & \\
7 &4 &5 &6-7 &8 &9 &10 &11 &12 &13 &14 &$\cdots $& \\
8 &3 &4 &5 &6 &7 &8 &9 &10 &11 &12 &$\cdots $&31 \\
\end{tabular}
\end{table}

We see in Table \ref{q3m2} that we recover all the generalized Hamming weights for $q^s=3$ and $m=2$, besides the one corresponding to $d=3$, $r=2$. For $q^s=3$ and $m=3$, we obtain most of the generalized Hamming weights (see Table \ref{q3m3}), although in some cases we cannot claim that we have the equality with the bound from Theorem \ref{cotaghw}. For example, for $d=5$, we obtain directly $30$ out of the $35$ generalized Hamming weights for $r>1$ (we recall that the bound is sharp for all the values represented with dots), and for the ones for which we only have bounds, the difference between the bound from Theorem \ref{cotaghw} and the true value is at most 1. 

For $q^s=4,5,$ and $m=2$, we show the values that we obtain in Table \ref{q4m2} and Table \ref{q5m2}. We see that for high values of $r$ we obtain the true values of the generalized Hamming weights, and for moderate values of $d$ and $r$, even in the cases where we are not able to state what the true value is, we have a small interval of possible values due to the bounds we are using. 

The tables can be further improved by using the following duality theorem for the generalized Hamming weights from \cite{weiGHW}.

\begin{thm}[(Duality)]\label{ghwdual}
Let $C$ be an $[n,k]$ code. Then
$$
\{d_r(C):1\leq r\leq k\}=\{1,2,\dots,n\}\setminus \{n+1-d_r(C^\perp):1\leq r\leq n-k\}.
$$
\end{thm}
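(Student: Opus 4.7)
The plan is to prove Wei's duality theorem by a counting argument based on a support-dimension function. For $I\subseteq\{1,\dots,n\}$, I would set $C(I):=\{c\in C:\Supp(c)\subseteq I\}$ and $f_C(i):=\max_{|I|=i}\dim C(I)$, and then establish three routine facts: $f_C(0)=0$, $f_C(n)=k$, and $f_C$ is non-decreasing with consecutive increments in $\{0,1\}$ (enlarging $I$ by a single coordinate raises $\dim C(I)$ by at most one). From the definition of the generalized Hamming weights one obtains $d_r(C)=\min\{i:f_C(i)\ge r\}$, so the set $W(C):=\{d_r(C):1\le r\le k\}$ consists precisely of those indices $i\in\{1,\dots,n\}$ where $f_C$ jumps, i.e., where $f_C(i)-f_C(i-1)=1$. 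The theorem is therefore equivalent to the symmetric statement $i\in W(C)\iff n-i+1\notin W(C^\perp)$.

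The technical core is the pointwise identity
\[
\dim C(I)-\dim C^\perp(I^c)=|I|-(n-k)\quad\text{for every }I\subseteq\{1,\dots,n\}.
\]
I would prove it by writing $C(I)=C\cap\F^I$, where $\F^I$ denotes the coordinate subspace of vectors supported in $I$, and taking orthogonal complements in $\F^n$ with respect to the standard inner product: $(C(I))^\perp=C^\perp+(\F^I)^\perp=C^\perp+\F^{I^c}$. Applying $\dim(A+B)=\dim A+\dim B-\dim(A\cap B)$ to the right-hand side, together with $\dim V+\dim V^\perp=n$ and the trivial identification $C^\perp\cap\F^{I^c}=C^\perp(I^c)$, yields the displayed formula after a short calculation.

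Once the pointwise identity is available, taking the maximum over all $I$ with $|I|=i$ on both sides gives $f_C(i)=f_{C^\perp}(n-i)+i-(n-k)$, whence
\[
f_C(i)-f_C(i-1)=1-\bigl(f_{C^\perp}(n-i+1)-f_{C^\perp}(n-i)\bigr).
\]
Since both $f_C$ and $f_{C^\perp}$ have $\{0,1\}$-valued consecutive increments, this forces the jumps of $f_C$ to occur exactly at those $i$ for which $f_{C^\perp}$ does \emph{not} jump at $n-i+1$. Translating back, $i\in W(C)$ if and only if $n-i+1\notin W(C^\perp)$, which is the required equivalence.

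The main obstacle I anticipate is verifying the pointwise orthogonal-complement identity carefully: the key subtlety is that the quantity $\dim C(I)-\dim C^\perp(I^c)$ must depend only on $|I|$, not on $I$ itself, so that the maximization on both sides can be carried out coordinate-by-coordinate and still preserve equality. The rest of the argument is bookkeeping about non-decreasing step functions with unit increments, and, as a sanity check, both sides of the final set equality visibly have cardinality $k$.
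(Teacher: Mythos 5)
The paper offers no proof of this statement: it is quoted verbatim from Wei's 1991 paper \cite{weiGHW}, so there is nothing internal to compare against. Your argument, however, is a correct and complete proof of the duality theorem. The pointwise identity $\dim C(I)-\dim C^\perp(I^c)=|I|-(n-k)$ follows exactly as you say from $(C\cap \F^I)^\perp=C^\perp+\F^{I^c}$ (valid because the standard bilinear form on $\F_{q}^n$ is non-degenerate, so $(U\cap V)^\perp=U^\perp+V^\perp$ and $\dim V+\dim V^\perp=n$ hold even though the form is not definite), and since the correction term $|I|-(n-k)$ depends only on $|I|$, the maximization over $|I|=i$ transfers to $I^c$ and yields $f_C(i)=f_{C^\perp}(n-i)+i-(n-k)$. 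Combined with the facts that $f_C(0)=0$, $f_C(n)=k$, consecutive increments lie in $\{0,1\}$, and $W(C)$ is exactly the jump set of $f_C$ (which also re-derives the monotonicity of Theorem \ref{monotonia}), the complementarity of jumps follows. This is the standard ``support/rank function'' proof, essentially matroid duality for the matroid attached to the code; Wei's original argument is organized somewhat differently (via column ranks of parity-check submatrices and an explicit correspondence between subcodes of $C$ and of $C^\perp$), but the two routes are equivalent in substance, and yours is arguably the cleaner bookkeeping. One small point worth making explicit if you write this up: the verification that $f_C(i+1)\leq f_C(i)+1$ comes from deleting one coordinate from an optimal $I'$, since imposing one additional vanishing coordinate cuts $\dim C(I')$ by at most one.
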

The set $\{d_r(C):1\leq r\leq k\}$ is called the weight hierarchy of the code $C$. From Theorem \ref{ghwdual}, we see that the weight hierarchy of a code is completely determined by the weight hierarchy of its dual, and vice versa. As the dual of a projective Reed-Muller code is another projective Reed-Muller code (for $d\not \equiv 0\bmod q^s-1$), this gives us more restrictions for the possible values appearing in the previous tables. 

\begin{ex}\label{exghwdual}
Let $q^s=3$, $d=3$ and $m=3$. Looking at Table \ref{q3m3}, we see that for $r=3,6,11,12$, we do not know the exact value of the corresponding generalized Hamming weight of $\PRM_3(3)$. Using Theorem \ref{dualPRM}, we know that $\PRM_3^\perp(3)=\PRM_3(3)$. We are going to use Theorem \ref{ghwdual} to obtain the true value of more of the generalized Hamming weights of $\PRM_3(3)$. Using Theorem \ref{paramPRM}, we know that $d_1(\PRM_3(3))=9$. Therefore, by Theorem \ref{ghwdual}, $n+1-d_1(\PRM_3(3))=41-9=32$ is not in the weight hierarchy of $\PRM_3(3)$. Looking at Table \ref{q3m3}, this implies that $d_{12}(\PRM_3(3))=31$. By the monotonicity from Theorem \ref{monotonia}, this also implies that $d_{11}(\PRM_3(3))=30$.

If we consider $d_{10}(\PRM_3(3))=27$ from Table \ref{q3m3}, by Theorem \ref{ghwdual} we see that $41-27=14$ is not in the weight hierarchy of $\PRM_3(3)$. Considering the next weight $d_{11}(\PRM_3(3))=30$, we obtain that $11$ is not in the weight hierarchy. But by Theorem \ref{ghwdual}, as these are consecutive generalized Hamming weights and we have the monotonicity from Theorem \ref{monotonia}, this implies that $12$ and $13$ are contained in the weight hierarchy of $\PRM_3(3)$. Thus, $d_3(\PRM_3(3))=13$. Finally, by considering $d_4(\PRM_3(3))=18$, we obtain that $41-18=23$ does not belong to the weight hierarchy of $\PRM_3(3)$, and therefore $d_6(\PRM_3(3))=22$. Hence, we have been able to obtain the whole weight hierarchy of $\PRM_3(3)$ by using the bound from Theorem \ref{cotaghw} and the general properties of the generalized Hamming weights from \cite{weiGHW}. 
\end{ex}

The ideas from Example \ref{exghwdual} can be applied to improve the previous tables. For the case $q^s=3$, $m=2$, the only value that we did not have exactly was $d_2(\PRM_3(2))$, which must be equal to $4$ because $\PRM_3^\perp(2)=\PRM_1(2)$ and $d_1(\PRM_1(2))=9$, which means that $n+1-d_1(\PRM_1(2))=5$ is not in the weight hierarchy of $\PRM_3(2)$. For the rest of the tables, we show the improved values in Tables \ref{q3m3bis}, \ref{q4m2bis} and \ref{q5m2bis}. We note that we obtain almost all of the exact values of the generalized Hamming weights corresponding to the previous tables for $d\not\equiv 0\bmod q^s-1$. 
\begin{table}[ht]
\caption{Improved table of the generalized Hamming weights for $q^s=3$, $m=3$, with $d\not\equiv 0\bmod q^s-1$.}
\label{q3m3bis}
\centering

\begin{tabular}{c|ccccccccccccccccccccccccccc}
$d\backslash r$ &2 &3 &4 &5 &6 &7 &8 &9 &10 &11 &12 &13& $\cdots$&20\\
\hline
1 &36 &39 &40  \\
3 &12 &13 &18 &21 &22 &24 &25 &26 &27 &30 &31 &33 &$\cdots $& 40 \\
5 &4 &6 &7 &8 &9 &10 &11 &12 &13 &15 &16&17&$\cdots $&40 \\
\end{tabular}
\end{table}

\begin{table}[ht]
\caption{Improved table of the generalized Hamming weights for $q^s=4$, $m=2$, with $d\not\equiv 0\bmod q^s-1$.}
\label{q4m2bis} 
\centering
\begin{tabular}{c|ccccccccccccccccccccc}
$d\backslash r$& 2 &3 &4 &5 &6 &7 &8 &9 &10 &11 & $\cdots$ &18 \\
\hline
1 &20 &21 \\
2 &15 &16 &19 &20 &21 \\
4 &5 &8 &9 &11 &12 &13 &14 &15 &16 &17&$\cdots$   \\
5 &4 &5 &7 &8 &9 &10 &11 &12 &13 &14& $\cdots$ &21  \\
\end{tabular}
\end{table}

\begin{table}[ht]
\caption{Improved table of the generalized Hamming weights for $q^s=5$, $m=2$, with $d\not\equiv 0\bmod q^s-1$.}
\label{q5m2bis} 
\centering
\begin{tabular}{c|ccccccccccccccccccccc}
$d\backslash r$& 2 &3 &4 &5 &6 &7 &8 &9 &10 &11 &$\cdots$ &28 \\
\hline
1 &30 &31 \\
2 &24 &25 &29 &30 &31 \\
3 &18-19 &19-20 &23 &24 &25 &28 &29 &30 &31 \\
5 &6 &10 &11 &12-14 &15 &16 &18 &19 &20 &21  &$\cdots $ & \\
6 &5 &6 &9 &10 &11 &13 &14 &15 &16 &17  &$\cdots $ & \\
7 &4 &5 &6 &8 &9 &10 &11 &12 &13 &14 &$\cdots $& 31 \\
\end{tabular}
\end{table}
We also note that the generalized Hamming weights of projective Reed-Muller codes seem to achieve the generalized Singleton bound \ref{singletongeneralizada} in many cases. This can be checked in the tables by considering, for a fixed degree $d$, the smallest value $r^*$ such that $d_{r+1}(\PRM_d(m))-d_{r}(\PRM_d(m))=1$ for all $r\geq r^*$. All the generalized Hamming weights $d_r(\PRM_d(m))$ with $r\geq r^*$ achieve the generalized Singleton bound. For instance, in Table \ref{q4m2bis}, for $d=4$ we have that $d_r(\PRM_4(2))$ achieves the generalized Singleton bound for $r\geq 5$ (in this case, $n-k+r=21-15+r=6+r$). 

For the cases where we have $d\equiv 0 \bmod q^s-1$, the weight hierarchy of $\PRM_d(m)$ gives information about the weight hierarchy of $\PRM_d^\perp(m)=\PRM_{d^\perp}(m)+\langle (1,\dots,1)\rangle$. In particular, if we consider $r^*$ as before, then
$$
d_1(\PRM_d^\perp(m))=d_1(\PRM_{d^\perp}(m)+\langle (1,\dots,1)\rangle)=n+2-d_{r^*}(\PRM_d(m)).
$$
As we are able to obtain exactly the value of the generalized Hamming weights for large values of $r$ in all the cases that we have checked, we can obtain the minimum distance of $\PRM^\perp_d(m)$ for the case $d\equiv 0 \bmod q^s-1$. The minimum distances of these codes are the only basic parameters missing in \cite{sorensen} for the family of projective Reed-Muller codes and their duals. Clearly, by using more of the generalized Hamming weights of $\PRM_d(m)$ we can obtain (or at least bound) more generalized Hamming weights of $\PRM^\perp_{d}(m)$ besides the minimum distance in the case $d\equiv 0\bmod q^s-1$. 



\bibliographystyle{abbrv}

\end{document}